\theoremstyle{definition}
\newtheorem{theorem}{Theorem}
\newtheorem{definition}{Definition}
\newcommand{\myheading}[1]{\noindent\textbf{#1}\hspace{0.5em}}
\begin{document}

\title{Universal Rate-Distortion-Classification Representations for Lossy Compression}
	
% \author{\IEEEauthorblockN{Nam Nguyen}
% \IEEEauthorblockA{School of Electrical Engineering\\ and Computer Science\\
% Oregon State University\\
% Corvallis, OR, 97331\\
% Email: nguynam4@oregonstate.edu}
% \and
% \IEEEauthorblockN{Thinh Nguyen}
% \IEEEauthorblockA{School of Electrical Engineering\\ and Computer Science\\
% Oregon State University\\
% Corvallis, OR, 97331 \\
% Email: thinhq@eecs.oregonstate.edu}
% \and
% \IEEEauthorblockN{Bella Bose}
% \IEEEauthorblockA{School of Electrical Engineering\\ and Computer Science\\
% Oregon State University\\
% Corvallis, OR, 97331 \\
% Email: bella.bose@oregonstate.edu}
% }

\author{\IEEEauthorblockN{$\textnormal{Nam Nguyen}^{1}$, $\textnormal{Thuan Nguyen}^{2}$, $\textnormal{Thinh Nguyen}^{1}$, and $\textnormal{Bella Bose}^{1}$ \thanks{This work was supported by the National Science Foundation Grant CCF-2417898.}}
\IEEEauthorblockA{$^{1}\textnormal{School of Electrical and Computer Engineering, Oregon State University, Corvallis, OR, 97331}$\\
$^{2}\textnormal{Department of Engineering, Engineering Technology, and Surveying, East Tennessee State University, Johnson, TN, 37604}$\\}
Emails: nguynam4@oregonstate.edu,  nguyent11@etsu.edu, thinhq@eecs.oregonstate.edu, bella.bose@oregonstate.edu}
	
\maketitle

\begin{abstract}
In lossy compression, Wang et al.~\cite{Wang2024} recently introduced the rate-distortion-perception-classification function, which supports multi-task learning by jointly optimizing perceptual quality, classification accuracy, and reconstruction fidelity. Building on the concept of a universal encoder introduced in~\cite{UniversalRDPs}, we investigate the universal representations that enable a broad range of distortion-classification tradeoffs through a single shared encoder coupled with multiple task-specific decoders. We establish, through both theoretical analysis and numerical experiment, that for Gaussian source under mean-squared error (MSE) distortion, the distortion-classification tradeoff region can be achieved using a single universal encoder. For general sources, we characterize the achievable region and identify conditions under which a universal encoder can produce a small distortion penalty. The experimental result on the MNIST dataset further supports our theoretical findings. We show that universal encoders can obtain distortion performance comparable to task-specific encoders. These results demonstrate the practicality and effectiveness of the proposed universal framework in multi-task compression scenarios.
\end{abstract}

\renewcommand\IEEEkeywordsname{Keywords}
\begin{IEEEkeywords}
Lossy compression, rate-distortion-classification tradeoff, universal representations.
\end{IEEEkeywords}

%===========================================================================================
\section{Introduction}
Rate-distortion theory has long served as the foundation for lossy compression, characterizing the minimum distortion achievable at a given bit rate~\cite{cover1999elements}. Conventional systems are typically evaluated using full-reference distortion metrics such as MSE, PSNR, SSIM, and MS-SSIM~\cite{wang2004image, wang2003multiscale}. However, recent research has demonstrated that minimizing distortion alone is insufficient to yield perceptually convincing reconstructions. This limitation is particularly evident in deep learning-based image compression, where empirical evidence suggests that improvements in perceptual quality often come at the expense of increased distortion~\cite{agustsson2019generative,blau2018perception}. 

To address this limitation, Blau and Michaeli~\cite{blau2019rethinking} introduced the rate-distortion-perception (RDP) framework, which incorporates perceptual quality, measured via distributional divergence, as an another metric. The RDP formulation shows a fundamental tradeoff among rate, distortion fidelity, and perceptual realism. Practical implementations, particularly those using GANs~\cite{goodfellow2014generative}, have demonstrated high perceptual quality at low bit rates~\cite{arjovsky2017wasserstein, tschannen2018deep, nowozin2016f, larsen2016autoencoding}. Common no-reference perceptual metrics include FID~\cite{heusel2017gans}, NIQE~ \cite{mittal2011blind}, PIQE~\cite{mittal2012making}, and BRISQUE~\cite{venkatanath2015blind}.

Extending this approach, the classification-distortion-perception (CDP) framework was introduced in \cite{CDP}, incorporating classification accuracy as an additional metric. This work showed that the tradeoffs among distortion, perceptual quality, and classification accuracy are fundamentally irreconcilable; that is, improving one metric generally degrades the others.

Recent work has explored integrating classification into lossy compression frameworks to jointly optimize multiple downstream objectives. The rate-distortion-classification (RDC) framework, proposed by Zhang et al.~\cite{Zhang2023}, formalized the joint optimization of rate, distortion, and classification accuracy, and established desirable properties such as monotonicity and convexity under multi-distribution source models. Extending this, Wang et al.~\cite{Wang2024} introduced the rate-distortion-perception-classification (RDPC) function, which characterizes the tradeoffs among rate, distortion, perception, and classification. Their results show that improving classification typically increases distortion or degrades perceptual quality.

These tradeoffs raise a fundamental question: are they inherently tied to the encoder's representation, or can a single encoder flexibly support different task objectives by using different decoders?  To address this, the universal RDP framework was introduced in~\cite{UniversalRDPs}, in which a fixed encoder is paired with multiple decoders to achieve various operating points in the distortion-perception space, without requiring encoder retraining.

Motivated by the idea in~\cite{UniversalRDPs}, we investigate the universal representations that enable a broad range of distortion-classification tradeoffs through a single shared encoder coupled with multiple task-specific decoders. We establish, through both theoretical analysis and numerical experiment, that for Gaussian source under mean-squared error (MSE) distortion, the distortion-classification tradeoff region can be achieved using a single universal encoder. For general sources, we characterize the achievable region and identify conditions under which a universal encoder can produce a small distortion penalty. The experimental result on the MNIST dataset further supports our theoretical findings. We show that universal encoders can obtain distortion performance comparable to task-specific encoders. These results demonstrate the practicality and effectiveness of the proposed universal framework in multi-task compression scenarios.

%===========================================================================================
\section{Rate-Distortion-Classification Representations}
Consider a source generating observable data \( X \sim p_X \) with multiple latent target labels \( S_1, \dots, S_K \sim p_{S_1, \dots, S_K} \). These labels are correlated with \( X \) under the joint distribution \( p_{X, S_1, \dots, S_K} \). While \( S_1, \dots, S_K \) are not directly observed, they are often inferable from \( X \). For example, if \( X \) is an image, classification tasks can be object recognition or scene understanding.
\begin{figure}[h]
\centering
\vspace{-0.2cm}
\includegraphics[width=0.45\textwidth]{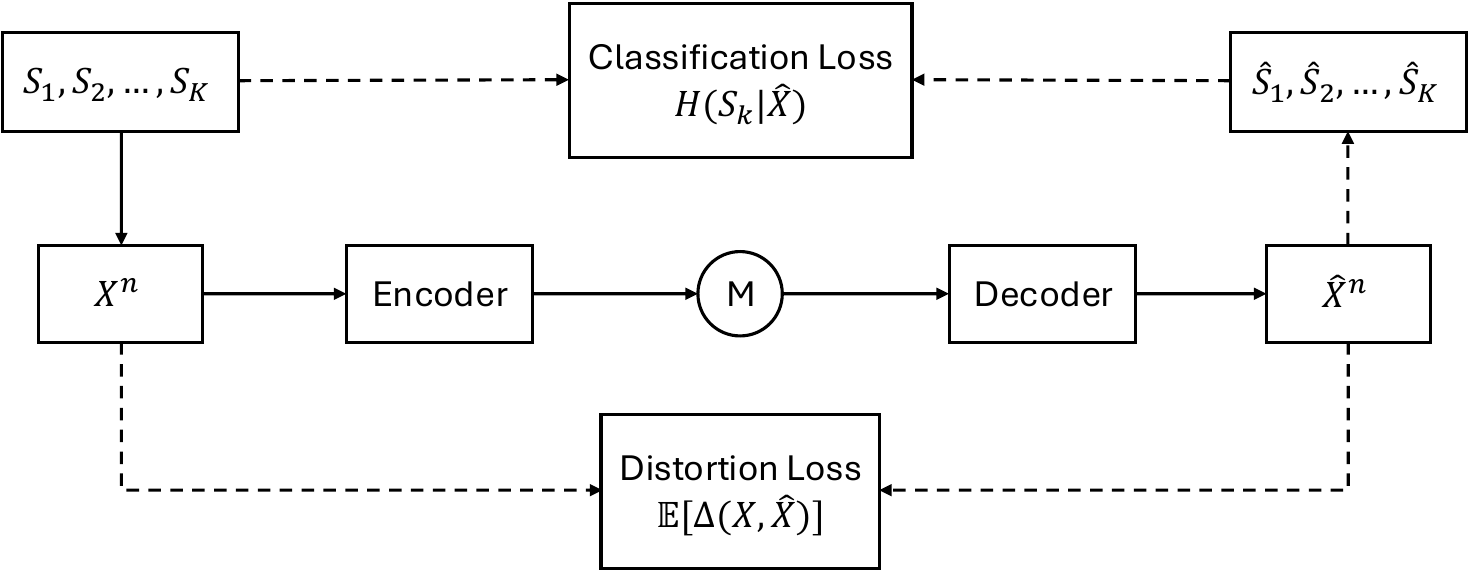}
\caption{Task-oriented lossy compression framework.}
\label{Lossy_Compression_Framework}
\end{figure}

As shown in Fig.~\ref{Lossy_Compression_Framework}, the lossy compression system consists of an encoder and a decoder. For a sequence \( X^n \sim p_X^n \), the encoder \( f: \mathcal{X}^n \to \{1,\dots,2^{nR}\} \) compresses it to a message \( M \) at rate \( R \), which is then decoded by \( g: \{1,\dots,2^{nR}\} \to \hat{\mathcal{X}}^n \) to produce \( \hat{X}^n \). The goal is to retain task-relevant information while compressing efficiently.

\myheading{Distortion constraint.}
The reconstructed signal \( \hat{X} \) must satisfy:
\begin{equation}
\mathbb{E}[\Delta(X, \hat{X})] \leq D,
\end{equation}
where \( \Delta \) is a distortion metric, e.g., Hamming or MSE.

\myheading{Classification constraint.}
We impose the following classification constraint:
\begin{equation}
    H(S_k|\hat{X}) \leq C_k, \qquad k \in [K],
    \label{ClassificationConstraint}
\end{equation}
for some $C_k > 0$. This ensures that the uncertainty of the classification variable $S_k$ given the reconstructed source $\hat{X}$ does not exceed $C_k$.

\myheading{Information RDC function.}
To jointly capture distortion and classification constraints, the information rate-distortion-classification function is defined as follows:
\begin{definition}
\cite{Wang2024}
For a source \( X \sim p_X \) and a single associated classification variable \( S \), the \emph{information rate-distortion-classification function} is defined as:
\begin{mini!}|s|[2] 
{p_{\hat{X}|X}} 
{I(X; \hat{X})} 
{\label{RDC}} 
{R(D, C) =} 
\addConstraint{\mathbb{E}[\Delta(X, \hat{X})]}{\leq D} 
\addConstraint{H(S | \hat{X})}{\leq C.}
\end{mini!}
\end{definition}

\begin{figure*}[htbp]
\begin{align}
\label{Theorem2}
D(C, R) = 
\begin{cases} 
    \sigma_X^2 e^{-2R}, \hspace{0.65cm} C  > \frac{1}{2}\log\left( 1-\frac{\theta_1^2(\sigma_X^2 - \sigma_X^2 e^{-2R})}{\sigma_S^2\sigma_X^4} \right) + h(S) \\
    \sigma_X^2 - \frac{\sigma_S^2 \sigma_X^4}{\theta_1^2} \left(1 - e^{-2h(S) + 2C}\right), \\ 
    \hspace{2cm} \frac{1}{2} \log\left(1 - \frac{\theta_1^2}{\sigma_S^2 \sigma_X^2} \right) + h(S) \leq C  \leq  \frac{1}{2}\log\left( 1-\frac{\theta_1^2(\sigma_X^2 - \sigma_X^2 e^{-2R})}{\sigma_S^2\sigma_X^4} \right) + h(S)\\
    0, \hspace{1.7cm}  C > h(S) \text{ and } R > h(X).
\end{cases}
\end{align}
\hrulefill
\end{figure*}

%===========================================================================================
\section{Gaussian Source} \label{sec-RDC-Gaussian}
This section analyzes the RDC tradeoff for a scalar Gaussian source, deriving closed-form expressions that highlight the relationship between compression, distortion, and classification.

For a scalar Gaussian source, the closed-form expression of \( R(D,C) \) is given in the following theorem by Wang et al.~\cite{Wang2024}.
\begin{theorem}\label{TheoremRDCGS}
\cite{Wang2024} Let \( X\sim \mathcal{N}(\mu_X,\sigma_X^2) \) be a Gaussian source and \( S\sim \mathcal{N}(\mu_S,\sigma_S^2) \) be an associated classification variable, with a covariance of \( \text{Cov}(X,S) = \theta_1 \). The problem (\ref{RDC}) is feasible if $C \geq \frac{1}{2} \log\left(1 - \frac{\theta_1^2}{\sigma_S^2 \sigma_X^2}\right) + h(S)$. For the MSE distortion (i.e., \( \mathbb{E}[\Delta(X, \hat{X})] = \mathbb{E}[(X-\hat{X})^2] \)), the \emph{information rate-distortion-classification function} is achieved by a jointly Gaussian estimator \( \hat{X} \) and given by
  \begin{align*}
    R(D,C) \! = \!
    \begin{cases}
        \frac{1}{2} \log \frac{\sigma_X^2}{D}, \\
        \hspace{1.4cm} D \leq \sigma_X^2 \left( 1 - \frac{1}{\rho^2} \left( 1 - e^{-2h(S) + 2C} \right) \right) \\
        -\frac{1}{2} \log \left(1 - \frac{1}{\rho^2} \left( 1 - e^{-2h(S) + 2C} \right) \right), \\
        \hspace{1.4cm} D > \sigma_X^2 \left( 1 - \frac{1}{\rho^2} \left( 1 - e^{-2h(S) + 2C} \right) \right) \\
        0, \hspace{1.1cm} C > h(S) \text{ and } D > \sigma_X^2.
    \end{cases}
  \end{align*}
where \( \rho = \frac{\theta_1}{\sigma_S \sigma_X} \) represents the correlation coefficient between \( X \) and \( S \), while \( h(\cdot) \) denotes the differential entropy.
\end{theorem}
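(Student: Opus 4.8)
The plan is to prove the formula by establishing matching converse (lower) and achievability (upper) bounds on $I(X;\hat{X})$, using throughout that for the scalar Gaussian model the classification constraint $H(S\mid\hat{X})\le C$ is read as a bound on conditional differential entropy, $h(S\mid\hat{X})\le C$. I would start from $I(X;\hat{X})=h(X)-h(X\mid\hat{X})$, so the whole problem reduces to obtaining the sharpest upper bound on $h(X\mid\hat{X})$ that the two constraints allow.

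For the converse I would derive two separate upper bounds on $h(X\mid\hat{X})$. The first is the classical rate--distortion bound: $h(X\mid\hat{X})\le h(X-\hat{X})\le\tfrac12\log\!\big(2\pi e\,\mathbb{E}[(X-\hat{X})^2]\big)\le\tfrac12\log(2\pi e D)$, which yields $I(X;\hat{X})\ge\tfrac12\log\frac{\sigma_X^2}{D}$. The second uses the classification constraint. Since $(X,S)$ are jointly Gaussian, decompose $S=\mathbb{E}[S\mid X]+V$, where $\mathbb{E}[S\mid X]$ is affine in $X$ with slope $\theta_1/\sigma_X^2$ and $V$ is Gaussian with variance $\sigma_S^2(1-\rho^2)$; crucially $V\perp X$, and together with the Markov chain $S-X-\hat{X}$ this makes $V$ independent of the whole pair $(X,\hat{X})$, so $h(V\mid\hat{X})=h(V)$. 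Applying the conditional entropy power inequality given $\hat{X}$ then gives
\[
e^{2h(S\mid\hat{X})}\ \ge\ \frac{\theta_1^2}{\sigma_X^4}\,e^{2h(X\mid\hat{X})}+e^{2h(V)}.
\]
Inserting $h(S\mid\hat{X})\le C$, $e^{2h(V)}=2\pi e\,\sigma_S^2(1-\rho^2)$, $e^{2h(S)}=2\pi e\,\sigma_S^2$, and $\rho^2=\theta_1^2/(\sigma_S^2\sigma_X^2)$ and rearranging produces $h(X\mid\hat{X})\le\tfrac12\log\!\big[2\pi e\,\sigma_X^2\big(1-\tfrac1{\rho^2}(1-e^{-2h(S)+2C})\big)\big]$, hence $I(X;\hat{X})\ge-\tfrac12\log\!\big(1-\tfrac1{\rho^2}(1-e^{-2h(S)+2C})\big)$. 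Taking the maximum of the two bounds, together with the trivial $I(X;\hat{X})\ge0$ which covers the degenerate regime $C>h(S)$, $D>\sigma_X^2$, reproduces the three-piece expression as a lower bound.

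For achievability I would exhibit a jointly Gaussian $\hat{X}$ meeting the lower bound with equality. Assume zero means and take the standard backward test channel $X=\hat{X}+N$ with $\hat{X}\perp N$, tuned so that $\hat{X}$ has correlation $r$ with $X$ and variance $r^2\sigma_X^2$; then $\mathbb{E}[(X-\hat{X})^2]=\sigma_X^2(1-r^2)$ and $I(X;\hat{X})=\tfrac12\log\frac1{1-r^2}$. Joint Gaussianity plus the Markov chain force $\rho_{S\hat{X}}=\rho\,r$, so $h(S\mid\hat{X})=h(S)+\tfrac12\log(1-\rho^2 r^2)$. Choosing $r^2=\max\!\big\{\,1-\tfrac{D}{\sigma_X^2},\ \tfrac1{\rho^2}(1-e^{-2h(S)+2C})\,\big\}$ (clipped to $[0,1]$) makes both constraints satisfied and yields exactly the converse value in each regime; the degenerate case is handled by a constant reconstruction, for which $\mathbb{E}[(X-\hat{X})^2]=\sigma_X^2<D$ and $h(S\mid\hat{X})=h(S)<C$. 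This simultaneously confirms the claim that the optimal estimator is jointly Gaussian.

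The main obstacle is the converse. The delicate steps are (i) justifying that $V=S-\mathbb{E}[S\mid X]$ is jointly independent of $(X,\hat{X})$, which is what legitimizes the \emph{conditional} EPI with $h(V\mid\hat{X})=h(V)$ rather than a weaker data-processing argument, and (ii) the bookkeeping that stitches the two lower bounds into precisely the stated piecewise form, including verifying that the boundary $D=\sigma_X^2\big(1-\tfrac1{\rho^2}(1-e^{-2h(S)+2C})\big)$ is where the active constraint switches and that the feasibility hypothesis $C\ge\tfrac12\log(1-\rho^2)+h(S)$ — itself forced by $C\ge h(S\mid\hat{X})\ge h(S\mid X)=h(S)+\tfrac12\log(1-\rho^2)$ — is exactly what keeps the entropy-power bound non-vacuous.
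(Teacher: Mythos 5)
Your proposal is sound, but note first that the paper does not actually prove Theorem~1 --- it is imported verbatim from Wang et al.\ \cite{Wang2024}; the closest in-house argument is the Appendix~A proof of the dual cross-section $D(C,R)$ (Theorem~2). Measured against that, your route is genuinely different. The paper first argues ``WLOG $\hat{X}$ is jointly Gaussian'' by citing three external comparison claims (the third of which hides an entropy-power-inequality step), and then reduces everything to a finite-dimensional KKT/case analysis over $(\mu_{\hat X},\sigma_{\hat X}^2,\theta_2)$. You instead prove a converse valid for \emph{arbitrary} $p_{\hat X|X}$ by taking the maximum of two lower bounds on $I(X;\hat X)=h(X)-h(X\mid\hat X)$: the Shannon lower bound from the distortion constraint, and a conditional EPI applied to the innovation decomposition $S=\mathbb{E}[S\mid X]+V$, where the key observation --- which you correctly flag and which does hold, since $p(v,x,\hat x)=p(x)p(v\mid x)p(\hat x\mid x)=p(v)p(x,\hat x)$ --- is that $V$ is independent of the pair $(X,\hat X)$, so $h(V\mid\hat X)=h(V)$ and the conditional EPI gives $e^{2h(S\mid\hat X)}\ge(\theta_1^2/\sigma_X^4)e^{2h(X\mid\hat X)}+e^{2h(V)}$; the algebra does reduce this to exactly $-\tfrac12\log(1-\rho^{-2}(1-e^{-2h(S)+2C}))$. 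Achievability via the backward Gaussian test channel with $r^2=\max\{1-D/\sigma_X^2,\ \rho^{-2}(1-e^{-2h(S)+2C})\}$ and the identity $\rho_{S\hat X}=\rho\, r$ for Gaussian Markov chains matches the converse in each regime, and the constant-reconstruction and feasibility arguments are correct. What your approach buys is a self-contained converse that never needs the ``reduce to Gaussian'' lemma; what the paper's parametric approach buys is that the same machinery mechanically yields the other cross-sections ($D(C,R)$, the universal region) without re-deriving entropy inequalities.
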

 
\begin{figure}[htbp]
\centering
\includegraphics[width=0.4\textwidth]{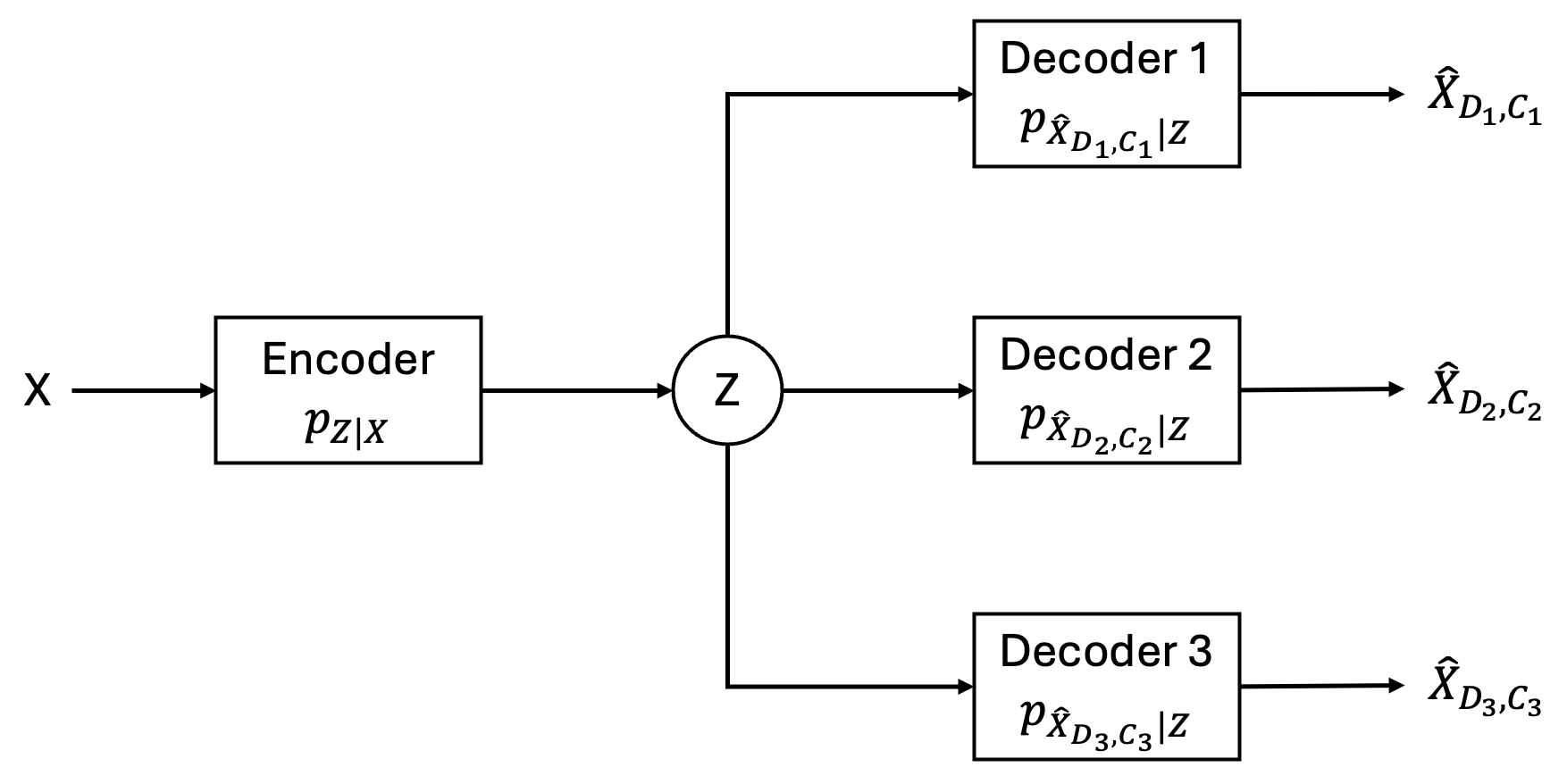}
\caption{The universal representation framework.}
\label{fig:Universal}
\end{figure}

It is noted that \( R(D, C) \) is both convex and monotonically non-increasing in \( D \) and \( C \), as established in~\cite{Wang2024}. Furthermore, we characterize the minimum achievable distortion as a function of \( C \) and \( R \) by the following definition.
\begin{definition}
For a source \( X \sim p_X \) and classification variable \( S \), the \emph{information distortion-classification-rate (DCR) function} is defined as:
\begin{mini!}|s|[2] % mini! = minimize
{p_{\hat{X}|X}} % optimization variable
{\mathbb{E}[(X-\hat{X})^2]} % objective function
{\label{DCR_Definition}} % label for optimization problem
{D(C,R) =} % optimization result} % optimization result
\addConstraint{I(X;\hat{X})}{\leq R}{} % constraint 1
\addConstraint{H(S | \hat{X})}{\leq C.}{} % constraint 3
\end{mini!}
\end{definition}

Our first contribution is the derivation of a closed-form expression for \( D(C, R) \) in the Gaussian source setting, as formally stated in Theorem~\ref{TheoremDCR_GS}.
\begin{theorem}\label{TheoremDCR_GS}
Consider a Gaussian source \( X\sim \mathcal{N}(\mu_X,\sigma_X^2) \) and an associated classification variable \( S\sim \mathcal{N}(\mu_S,\sigma_S^2) \) with covariance \( \text{Cov}(X,S) = \theta_1 \). The problem (\ref{DCR_Definition}) is feasible if the classification loss satisfies $C \geq \frac{1}{2} \log\left(1 - \frac{\theta_1^2}{\sigma_S^2 \sigma_X^2}\right) + h(S)$. Under the MSE distortion, the \emph{information distortion-classification-rate function} is achieved by a jointly Gaussian estimator \( \hat{X} \) and is given by (\ref{Theorem2}).
\end{theorem}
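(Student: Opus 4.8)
The plan is to obtain $D(C,R)$ by inverting, for each fixed $C$, the rate--distortion--classification function $R(D,C)$ of Theorem~\ref{TheoremRDCGS} in its distortion argument. Problems~(\ref{RDC}) and~(\ref{DCR_Definition}) are both optimized over the test channels $p_{\hat X\mid X}$, share the constraint $H(S\mid\hat X)\le C$, and merely interchange the roles of $I(X;\hat X)$ and $\mathbb{E}[(X-\hat X)^2]$. Since $R(D,C)$ is convex and non-increasing in $D$ (established in~\cite{Wang2024}), the sublevel set $\{D\ge 0:R(D,C)\le R\}$ is a half-line, and a short two-way argument---the jointly Gaussian channel attaining $R(D,C)$ is admissible in~(\ref{DCR_Definition}) whenever $R(D,C)\le R$, and the optimizer of~(\ref{DCR_Definition}) is admissible in~(\ref{RDC}) at distortion $D(C,R)$, so $R(D(C,R),C)\le R$---yields $D(C,R)=\min\{D\ge 0:R(D,C)\le R\}$. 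I would record this reduction first; it also transfers the ``achieved by a jointly Gaussian estimator'' conclusion for free, and the feasibility threshold $C\ge\tfrac12\log(1-\theta_1^2/(\sigma_S^2\sigma_X^2))+h(S)$ is then just the elementary inequality $H(S\mid\hat X)\ge H(S\mid X)=h(S)+\tfrac12\log(1-\theta_1^2/(\sigma_S^2\sigma_X^2))$, valid along every Markov chain $S-X-\hat X$ and attained at $\hat X=X$.

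It then remains to invert $R(D,C)$ branch by branch. On the branch $R(D,C)=\tfrac12\log(\sigma_X^2/D)$, solving for $D$ gives $D=\sigma_X^2 e^{-2R}$; substituting this into the inequality $D\le\sigma_X^2\bigl(1-\tfrac1{\rho^2}(1-e^{-2h(S)+2C})\bigr)$ that delimits this branch in Theorem~\ref{TheoremRDCGS} and solving for $C$ produces exactly the first threshold $C>\tfrac12\log\!\bigl(1-\tfrac{\theta_1^2(\sigma_X^2-\sigma_X^2 e^{-2R})}{\sigma_S^2\sigma_X^4}\bigr)+h(S)$ of~(\ref{Theorem2})---the regime in which the classification constraint is slack and the answer is the ordinary Gaussian rate--distortion value. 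The middle branch of $R(D,C)$ is constant in $D$, frozen at $-\tfrac12\log\!\bigl(1-\tfrac1{\rho^2}(1-e^{-2h(S)+2C})\bigr)$, the least rate at which the classification constraint can be met on its own; inverting a flat branch, the smallest distortion reachable while the rate sits at that forced level is its left endpoint, $D=\sigma_X^2\bigl(1-\tfrac1{\rho^2}(1-e^{-2h(S)+2C})\bigr)=\sigma_X^2-\tfrac{\sigma_S^2\sigma_X^4}{\theta_1^2}(1-e^{-2h(S)+2C})$, which is the second case of~(\ref{Theorem2}); rearranging ``$R$ does not exceed that forced level'' recovers the $C$-interval written there. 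The degenerate third case ($C>h(S)$, $R>h(X)$, $D=0$) is the corner where the classification constraint is vacuous and the rate suffices to drive the MMSE to zero; I would treat it as a limiting statement.

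As an independent derivation and a consistency check I would set up the Gaussian problem directly: restrict to $\hat X=\alpha X+W$ with $W\perp(X,S)$ Gaussian, observe that rescaling $\hat X$ changes neither $I(X;\hat X)$ nor $H(S\mid\hat X)$, so one may take $\hat X=\mathbb{E}[X\mid\hat X]$ and use $v:=\mathrm{Var}(\mathbb{E}[X\mid\hat X])\in[0,\sigma_X^2]$ as the single parameter. Then $\mathbb{E}[(X-\hat X)^2]=\sigma_X^2-v$, $I(X;\hat X)=\tfrac12\log\tfrac{\sigma_X^2}{\sigma_X^2-v}$, and---because correlations multiply along the Gaussian chain $S-X-\hat X$---$H(S\mid\hat X)=h(S)+\tfrac12\log\!\bigl(1-\tfrac{\theta_1^2 v}{\sigma_S^2\sigma_X^4}\bigr)$; the rate constraint caps $v$ from above, the classification constraint bounds it from below, and minimizing $\sigma_X^2-v$ reproduces the three cases. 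I expect the main obstacle to be not any single computation but the bookkeeping at the interfaces between branches---in particular, verifying that the flat, rate-independent branch of $R(D,C)$ glues onto the second case of~(\ref{Theorem2}) and pinning down exactly which branch governs a given pair $(C,R)$---together with supplying, or cleanly invoking from~\cite{Wang2024}, the maximum-entropy argument that makes a jointly Gaussian $\hat X$ optimal in the first place.
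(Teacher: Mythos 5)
Your overall strategy---reducing to a jointly Gaussian $\hat X$ and then reading $D(C,R)$ off as the generalized inverse $\min\{D\ge 0: R(D,C)\le R\}$ of Theorem~\ref{TheoremRDCGS}, backed by the one-parameter check in $v=\mathrm{Var}(\mathbb{E}[X\mid\hat X])$---is legitimate and arguably cleaner than the paper's direct KKT case analysis over $(\mu_{\hat X},\sigma_{\hat X},\theta_2)$, and it handles the first and degenerate cases of (\ref{Theorem2}) correctly. The gap is in your treatment of the flat middle branch. Write $R^*(C)=-\tfrac12\log\bigl(1-\tfrac1{\rho^2}(1-e^{-2h(S)+2C})\bigr)$ for the forced level and $C^*(R)=\tfrac12\log\bigl(1-\tfrac{\theta_1^2(\sigma_X^2-\sigma_X^2e^{-2R})}{\sigma_S^2\sigma_X^4}\bigr)+h(S)$ for the threshold in (\ref{Theorem2}). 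You assert that when $R$ does not exceed $R^*(C)$ the smallest reachable distortion is the left endpoint $D^*(C)=\sigma_X^2\bigl(1-\tfrac1{\rho^2}(1-e^{-2h(S)+2C})\bigr)$. But by your own reduction, when $R<R^*(C)$ strictly the set $\{D: R(D,C)\le R\}$ is \emph{empty}---$R(D,C)\ge R^*(C)>R$ for every $D$---so the problem is infeasible there, not equal to $D^*(C)$; the left endpoint is attained only at the single boundary value $R=R^*(C)$ (equivalently $C=C^*(R)$). Indeed $D^*(C)=\sigma_X^2e^{-2R^*(C)}<\sigma_X^2e^{-2R}$ for $R<R^*(C)$, which would put the answer strictly below the unconstrained Shannon value $D(\infty,R)=\sigma_X^2e^{-2R}$---impossible, since adding the classification constraint can only increase the minimum. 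Your own consistency check exposes this: the rate constraint caps $v$ at $\sigma_X^2(1-e^{-2R})$, the classification constraint bounds it below by $\tfrac{\sigma_S^2\sigma_X^4}{\theta_1^2}(1-e^{-2h(S)+2C})$, and maximizing $v$ yields $D=\sigma_X^2e^{-2R}$ whenever this interval is nonempty (i.e., $C\ge C^*(R)$) and infeasibility otherwise; it does not reproduce a second branch.

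For comparison, the paper reaches its middle case by a different route (Case~2 of Appendix~\ref{Appendix_Proof_DCR_GS}): it fixes $\sigma_{\hat X}^2=\theta_2$ from the active classification constraint and then asks when the resulting $I(X;\hat X)=R^*(C)$ is below $R$, concluding $C<C^*(R)$; solving that inequality carefully actually gives $C>C^*(R)$, i.e., exactly the regime already covered by Case~1, where this construction yields a larger distortion and hence is never optimal. So your inversion route, carried out rigorously, would not ``recover the $C$-interval written there''---it would show that on the interior of that interval the feasible set of (\ref{DCR_Definition}) is empty. You should either prove the statement your method actually yields ($D(C,R)=\sigma_X^2e^{-2R}$ for $C\ge C^*(R)$, infeasible for $h(S|X)\le C<C^*(R)$, plus the degenerate corner) and flag the discrepancy with (\ref{Theorem2}), or pinpoint where you believe the monotone-inverse argument breaks down; as written, the middle-branch step is unjustified and the value it asserts contradicts the Shannon lower bound.
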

\begin{proof} 
The proof is provided in Appendix \ref{Appendix_Proof_DCR_GS}.
\end{proof}

Following the results in~\cite{Wang2024}, it can also be shown that the distortion-classification-rate function \( D(C, R) \) is convex and monotonically non-increasing in both \( C \) and \( R \). For any fixed \( R \), as \( C \) increases from \( \frac{1}{2} \log\left(1 - \frac{\theta_1^2}{\sigma_S^2 \sigma_X^2}\right) + h(S) \) to \( \frac{1}{2}\log\left(1 - \frac{\theta_1^2 (\sigma_X^2 - \sigma_X^2 e^{-2 R})}{\sigma_S^2\sigma_X^4} \right) +h(S) \), the distortion \( D(C,R) \) decreases monotonically from $\sigma_X^2 - \frac{\sigma_S^2 \sigma_X^4}{\theta_1^2} \left(1 - e^{-2h(S) + 2C}\right)$ to the optimal value \( \sigma^2_X e^{-2R} \). Increasing \( C \) beyond this point does not yield further improvements in distortion.

%===========================================================================================
\section{Universal Representations}
Designing separate encoders for each distortion-classification constraint is often not desirable. This motivates the use of universal representations, where a single encoder supports multiple decoding constraints, each for a distinct task, as shown in Figure~\ref{fig:Universal}. This section introduces the universal RDC framework, quantifies the rate penalty, and presents theoretical results for both Gaussian and general sources.

\subsection{Definitions}
In the standard RDC setting, the minimum rate to satisfy a distortion-classification pair \( (D, C) \) is achieved by jointly optimizing the encoder and decoder. The proposed universal RDC framework extends this by fixing the encoder and allowing the decoder to adapt, thereby supporting all constraint pairs \( (D, C) \in \Theta \), where $\Theta$ is a given set of multiple $(D, C)$ pairs.

This raises a key question: What is the minimal additional rate required to satisfy all constraints in \( \Theta \) with a single encoder? Ideally, this rate penalty is small, indicating near-optimal performance across tasks. We formalize this notion using the information universal rate-distortion-classification function, adapting the definition from~\cite{UniversalRDPs}, as follows.
\begin{definition}
Let \( Z \sim p_{Z|X} \) be a representation of \( X \). Define \( \mathcal{P}_{Z|X}(\Theta) \) as the set of encoders such that for every \( (D, C) \in \Theta \), there exists a decoder \( p_{\hat{X}_{D,C}|Z} \) satisfying $\mathbb{E}[\Delta(X, \hat{X}_{D,C})] \leq D, \quad H(S | \hat{X}_{D,C}) \leq C,$
with \( X \leftrightarrow Z \leftrightarrow \hat{X}_{D,C} \). The universal RDC function is defined as
\begin{equation}
\label{eqn:R_phi}
R(\Theta) = \inf_{p_{Z|X} \in \mathcal{P}_{Z|X}(\Theta)} I(X; Z).
\end{equation}
\end{definition}

A representation \( Z \) is \(\Theta\)-universal if it satisfies all constraints in \( \Theta \) via appropriate decoders.

Similarly, we adapt the definition from~\cite{UniversalRDPs} to quantify the rate penalty under the classification constraint as follows:
\begin{definition}
The \emph{rate penalty} is
\begin{equation}
A(\Theta) = R(\Theta) - \sup_{(D, C) \in \Theta} R(D, C),
\end{equation}
which quantifies the extra rate required for universality.
\end{definition}

Let \( \Omega(R) = \{ (D, C) : R(D, C) \leq R \} \) denote the set of achievable distortion-classification pairs at rate \( R \), and define
\begin{equation*}
\Omega(p_{Z|X}) =
\left\{ (D, C) : \exists\, p_{\hat{X}_{D,C}|Z} \text{ s.t. } 
\begin{aligned}
    &\mathbb{E}[\Delta(X, \hat{X}_{D,C})] \leq D, \\
    & H(S | \hat{X}_{D,C}) \leq C
\end{aligned}
\right\}.
\end{equation*}
If \( I(X; Z) = R \) and \( \Omega(p_{Z|X}) = \Omega(R) \), then \( Z \) achieves the maximal distortion-classification region at rate \( R \).

%-------------------------------------------------------------------------------------------
\subsection{Main Results}
\begin{theorem}\label{Theorem_gaussian_universality}
Let \( X \sim \mathcal{N}(\mu_X, \sigma_X^2) \) be a scalar Gaussian source and \( S \sim \mathcal{N}(\mu_S, \sigma_S^2) \) a classification variable with \( \mathrm{Cov}(X, S) = \theta_1 \). Assume MSE distortion and classification loss measured by \( H(S | \hat{X}) \). Then, for any non-empty set \( \Theta \) of distortion-classification pairs \( (D, C) \), the rate penalty is zero:
\begin{equation}
    A(\Theta) = 0.
\end{equation}

Moreover, any jointly Gaussian representation \( Z \) of \( X \) satisfying
\begin{equation}
    I(X; Z) = \sup_{(D, C) \in \Theta} R(D, C),
\end{equation}
achieves the maximal distortion-classification region:
\begin{equation}\label{eqn:sup_rate}
    \Theta \subseteq \Omega(p_{Z|X}) = \Omega(I(X; Z)).
\end{equation}
\end{theorem}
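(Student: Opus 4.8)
The plan is to reduce the universality claim to the structure of the Gaussian RDC solution in Theorem~\ref{TheoremRDCGS}. The key observation is that for a scalar Gaussian source under MSE with a jointly Gaussian classification variable, the optimal reconstruction $\hat X_{D,C}$ that achieves $R(D,C)$ is itself jointly Gaussian with $X$, and is obtained as a scaled-and-noised version of $X$; in particular, all such optimal $\hat X_{D,C}$ lie on a ``nested'' family. Concretely, I would first let $R^\star = \sup_{(D,C)\in\Theta} R(D,C)$ and construct the candidate encoder $Z = \alpha X + N$ with $N\sim\mathcal N(0,\nu^2)$ independent of $X$, choosing $(\alpha,\nu^2)$ so that $I(X;Z) = \tfrac12\log\frac{\sigma_X^2(\alpha^2\sigma_X^2+\nu^2)}{\nu^2\sigma_X^2}=R^\star$ — equivalently so that the backward channel $X\mid Z$ has the same conditional variance $\sigma_X^2 e^{-2R^\star}$ as the rate-$R^\star$ optimum. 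This single $Z$ is the proposed $\Theta$-universal representation.

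Next I would show $\Theta\subseteq\Omega(p_{Z|X})$. Fix $(D,C)\in\Theta$, so $R(D,C)\le R^\star$. From Theorem~\ref{TheoremRDCGS} the rate-$R(D,C)$ optimum $\hat X_{D,C}$ is jointly Gaussian with $X$ and has the form $\hat X_{D,C}=\beta X + N'$ for a suitable $\beta$ and independent Gaussian $N'$ (with the degenerate cases handled separately); the point is that the backward channel $X\mid\hat X_{D,C}$ has conditional variance at least $\sigma_X^2 e^{-2R^\star}$, i.e.\ $\hat X_{D,C}$ is ``coarser'' than $Z$. Because all variables are jointly Gaussian and the conditional variance of $X$ given $\hat X_{D,C}$ dominates that given $Z$, one can realize $\hat X_{D,C}$ as a further-processed (scaled plus independent-Gaussian-noise) function of $Z$, establishing the Markov chain $X\leftrightarrow Z\leftrightarrow\hat X_{D,C}$. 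That processed reconstruction inherits exactly the same marginal/joint law as the optimal $\hat X_{D,C}$, hence the same $\mathbb E[(X-\hat X_{D,C})^2]\le D$ and $H(S\mid\hat X_{D,C})\le C$ (here I use that $H(S\mid\hat X)$ depends only on the joint Gaussian law of $(S,\hat X)$, which is pinned down by $\mathrm{Cov}(X,\hat X)$ and $\mathrm{Var}(\hat X)$). This gives $(D,C)\in\Omega(p_{Z|X})$, hence $\Theta\subseteq\Omega(p_{Z|X})$, and since a decoder exists for every $(D,C)\in\Theta$ we get $Z\in\mathcal P_{Z|X}(\Theta)$, so $R(\Theta)\le I(X;Z)=R^\star$.

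For the matching lower bound, note that any $\Theta$-universal encoder must in particular satisfy each single $(D,C)\in\Theta$, so $I(X;Z)\ge R(D,C)$ for all such pairs, hence $R(\Theta)\ge R^\star$; combined with the previous paragraph this yields $R(\Theta)=R^\star$ and therefore $A(\Theta)=R(\Theta)-R^\star=0$. Finally, to upgrade $\Theta\subseteq\Omega(p_{Z|X})$ to the equality $\Omega(p_{Z|X})=\Omega(I(X;Z))=\Omega(R^\star)$: the inclusion $\Omega(p_{Z|X})\subseteq\Omega(R^\star)$ is immediate from the data-processing inequality $I(X;\hat X_{D,C})\le I(X;Z)=R^\star$, and the reverse inclusion $\Omega(R^\star)\subseteq\Omega(p_{Z|X})$ follows by the same realizability argument as above applied to every $(D,C)$ with $R(D,C)\le R^\star$, not just those in $\Theta$. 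I expect the main obstacle to be the degenerate regimes of Theorem~\ref{TheoremRDCGS} — the zero-rate branch and the boundary where increasing $C$ stops helping — where the ``backward channel'' picture must be stated carefully (e.g.\ when $\hat X$ is constant, or when $\hat X=X$); these need a short separate check that the nesting/Markov construction still goes through, but the jointly-Gaussian MMSE structure makes each case routine.
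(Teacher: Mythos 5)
Your proposal is correct and follows essentially the same route as the paper: fix a jointly Gaussian representation $Z$ with $I(X;Z)=\sup_{(D,C)\in\Theta}R(D,C)$, show every pair with $R(D,C)\le I(X;Z)$ is reachable by a decoder acting on $Z$, and combine with the trivial lower bound $R(\Theta)\ge\sup R(D,C)$. The only difference is in the decoder: you realize the optimal $\hat X_{D,C}$ as a stochastically degraded version of $Z$ (scaling plus independent Gaussian noise, justified by the nesting of scalar Gaussian backward channels), which exactly reproduces the optimal joint law, whereas the paper uses a deterministic affine map $\hat X_{D,C}=\mathrm{sign}(\rho_{XZ})\gamma(Z-\mu_Z)+\mu_X$ and verifies the two constraints by direct moment computation; both constructions suffice, and yours is if anything the more careful treatment of the covariance matching.
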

\begin{proof} 
The detailed proof is presented in Appendix~\ref{Appendix_Proof_GS_Universality}.
\end{proof}

In addition, we consider a general source \( X \sim p_X \) and characterize the distortion-classification region induced by an arbitrary representation \( Z \) under MSE distortion.

\begin{theorem}\label{Theorem_general_universality}
Let \( X \sim p_X \) be a general source and \( S \) a classification variable with \( \mathrm{Cov}(X, S) = \theta_1 \). Assume distortion is measured by MSE and classification loss by \( H(S | \hat{X}) \). Let \( Z \) be any representation of \( X \), and define \( \tilde{X} = \mathbb{E}[X | Z] \) as the minimum mean square estimator. Then the closure of the achievable region, \( \mathrm{cl}(\Omega(p_{Z | X})) \), satisfies
\begin{equation*}
\begin{split}
    \Omega(p_{Z|X}) \! &\subseteq \! \left\{ (D,C) : D \geq \mathbb{E}{\|X-\tilde{X}\|^2} \! + \! \begin{aligned}
    &\inf_{p_{\hat{X}}}   W^2_2(p_{\tilde{X}},p_{\hat{X}}) \\
    &\text{s.t. } H(S|\hat{X}) \leq C
\end{aligned} \right\} \\
    \! & \subseteq \! \mbox{cl}(\Omega(p_{Z|X})),
\end{split}
\end{equation*}
where the squared 2-Wasserstein distance is $W_2^2(p_X, p_{\hat{X}}) = \inf_{p_{X,\hat{X}}} \mathbb{E}[\|X - \hat{X}\|^2]$ with the infimum taken over all joint distributions with marginals \( p_X \) and \( p_{\hat{X}} \).

Moreover, \( \mathrm{cl}(\Omega(p_{Z|X})) \) contains the extreme points:
\begin{align*}
(D^{(a)}, C^{(a)}) &= \left( \mathbb{E}[\|X-\tilde{X}\|^2], \sum_{s}\sum_{\tilde{x}} p_{\tilde{X}} p_{S|\tilde{X}} \log\frac{1}{p_{S|\tilde{X}}} \right), \\
(D^{(b)}, C^{(b)}) &= \left( \mathbb{E}[\|X - \tilde{X}\|^2] + W_2^2(p_{\tilde{X}}, p_{\hat{X}^{C_{\text{min}}}}), C_{\min} \right),
\end{align*}
where \begin{mini!}|s|[2]
    {p_{\hat{X}}} 
    {H(S | \hat{X})} 
    {\label{C_min}} 
    {p_{\hat{X}^{C_{\text{min}}}} = \arg} 
    \addConstraint{\mathbb{E}[\|X - \hat{X}\|^2]}{\leq D.}
\end{mini!}
The minimum classification loss is: $C_{\min} = \sum_{s}\sum_{\hat{x}^{C_{\text{min}}}} p_{\hat{X}^{C_{\min}}} p_{S|\hat{X}^{C_{\min}}} \log\frac{1}{p_{S|\hat{X}^{C_{\min}}}}$.
\end{theorem}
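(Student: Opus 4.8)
The plan is to decompose the smallest distortion attainable from a fixed representation $Z$ into two parts — an irreducible estimation error set by $Z$ alone, plus an optimal-transport correction that carries the entire burden of the classification constraint — paralleling the universal RDP argument of~\cite{UniversalRDPs}, the new twist being that $H(S|\hat{X})$ depends on the \emph{joint} law of $(S,\hat{X})$ rather than only on the marginal of $\hat{X}$. The key tool is an orthogonality identity: for every $\hat{X}$ with $X \leftrightarrow Z \leftrightarrow \hat{X}$, writing $\tilde{X} = \mathbb{E}[X|Z]$, one has $\mathbb{E}\|X-\hat{X}\|^2 = \mathbb{E}\|X-\tilde{X}\|^2 + \mathbb{E}\|\tilde{X}-\hat{X}\|^2$. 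I would establish this first: conditioning on $Z$ makes $X$ and $\hat{X}$ independent and leaves $\tilde{X}$ deterministic while $\mathbb{E}[X-\tilde{X}|Z]=0$, so the cross term vanishes, and no Markov assumption beyond $X\leftrightarrow Z\leftrightarrow\hat{X}$ is needed.

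For the converse inclusion (the first $\subseteq$ in the theorem), take any achievable $(D,C)$ together with a witnessing decoder and its output $\hat{X}$. The identity gives $D \ge \mathbb{E}\|X-\hat{X}\|^2 = \mathbb{E}\|X-\tilde{X}\|^2 + \mathbb{E}\|\tilde{X}-\hat{X}\|^2$, and since $(\tilde{X},\hat{X})$ is a coupling of $p_{\tilde{X}}$ and $p_{\hat{X}}$ we get $\mathbb{E}\|\tilde{X}-\hat{X}\|^2 \ge W_2^2(p_{\tilde{X}},p_{\hat{X}})$. Because $H(S|\hat{X})\le C$, the marginal $p_{\hat{X}}$ is feasible for the inner minimization, so $W_2^2(p_{\tilde{X}},p_{\hat{X}}) \ge \inf_{p_{\hat{X}}:\,H(S|\hat{X})\le C} W_2^2(p_{\tilde{X}},p_{\hat{X}})$, which is exactly the stated lower bound on $D$.

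For achievability (into the closure), fix $(D,C)$ in the middle set, choose a marginal $p_{\hat{X}^{\star}}$ that is $\varepsilon$-optimal for the inner problem together with a coupling of $p_{\tilde{X}}$ and $p_{\hat{X}^{\star}}$ of transport cost within $\varepsilon$ of $W_2^2$, and define the decoder by $Z \mapsto \tilde{X} \mapsto \hat{X}$ through this coupling. Then $X\leftrightarrow Z\leftrightarrow\tilde{X}\leftrightarrow\hat{X}$, the orthogonality identity bounds the distortion by $\mathbb{E}\|X-\tilde{X}\|^2 + W_2^2(p_{\tilde{X}},p_{\hat{X}^{\star}}) + \varepsilon \le D+\varepsilon$, and letting $\varepsilon\downarrow0$ places $(D,C)$ in $\mathrm{cl}(\Omega(p_{Z|X}))$. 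The two listed extreme points are the endpoints of this construction: $(D^{(a)},C^{(a)})$ corresponds to the choice $\hat{X}=\tilde{X}$ (zero transport cost, classification loss $H(S|\tilde{X})$), while $(D^{(b)},C^{(b)})$ comes from transporting $p_{\tilde{X}}$ onto the marginal $p_{\hat{X}^{C_{\min}}}$ that solves~(\ref{C_min}) and hence minimizes the classification loss, so that $C^{(b)}=C_{\min}$ and $D^{(b)}$ picks up the residual cost $W_2^2(p_{\tilde{X}},p_{\hat{X}^{C_{\min}}})$.

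\textbf{Main obstacle.} The delicate step is the classification constraint in the achievability direction. Unlike a perception term that depends only on the marginal of $\hat{X}$, $H(S|\hat{X})$ depends on the joint law of $(S,\hat{X})$, which is pinned down by $S\leftrightarrow X\leftrightarrow\tilde{X}\leftrightarrow\hat{X}$ together with the chosen $\tilde{X}$-to-$\hat{X}$ coupling, and a coupling that is (near-)optimal for $W_2$ need not be the one that is optimal for classification. Making this rigorous will likely require taking the inner infimum jointly over the marginal $p_{\hat{X}}$ and a compatible coupling, exploiting the data-processing monotonicity $H(S|\hat{X})\ge H(S|\tilde{X})$ (so the relevant couplings behave like coarsenings of $\tilde{X}$) and a standard approximation argument so that the constructed decoder inherits $H(S|\hat{X})\le C$; verifying that the minimizer in~(\ref{C_min}) exists, or else passing to the closure, is the remaining routine point.
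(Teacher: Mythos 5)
Your proposal follows essentially the same route as the paper's proof: the orthogonality identity \(\mathbb{E}\|X-\hat{X}\|^2=\mathbb{E}\|X-\tilde{X}\|^2+\mathbb{E}\|\tilde{X}-\hat{X}\|^2\) under the Markov chain \(X\leftrightarrow Z\leftrightarrow\hat{X}\), the coupling argument lower-bounding \(\mathbb{E}\|\tilde{X}-\hat{X}\|^2\) by \(W_2^2(p_{\tilde{X}},p_{\hat{X}})\) and the feasibility of \(p_{\hat{X}_{D,C}}\) for the inner infimum, the \(\varepsilon\)-optimal coupling construction for the closure inclusion, and the choices \(\hat{X}=\tilde{X}\) and \(\hat{X}=\hat{X}^{C_{\min}}\) for the two extreme points. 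The ``main obstacle'' you flag---that \(H(S|\hat{X})\) depends on the joint law of \((S,\hat{X})\) induced by the chosen \(\tilde{X}\!\to\!\hat{X}\) coupling, so a \(W_2\)-optimal coupling need not inherit the classification constraint---is a real subtlety, and the paper's own proof does not resolve it either: it asserts that the constructed \(\hat{X}'\) satisfies \(H(S|\hat{X}')\le C'\) without verifying that the induced joint with \(S\) does so, and it writes the inner constraint on the marginal \(p_{\hat{X}}\) alone even though that constraint is not determined by the marginal. Your suggested fix (taking the infimum jointly over the marginal and a compatible coupling, using \(H(S|\hat{X})\ge H(S|\tilde{X})\)) is the right direction, so your write-up is on at least equal footing with the published argument.
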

\begin{proof} 
A complete proof is provided in Appendix \ref{Appendix_Proof_General_Universality}.
\end{proof}

\begin{figure}[!htbp]
\centering
\includegraphics[width=0.42\textwidth]{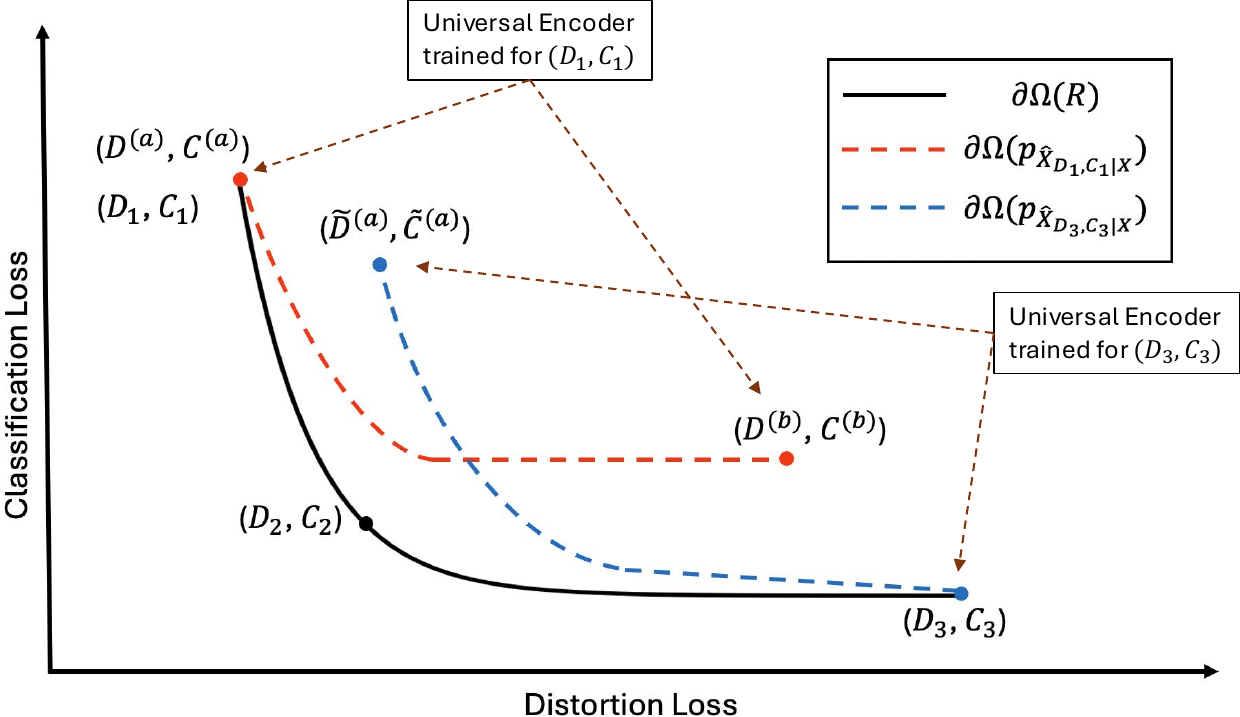}
\caption{Universality for a general source. Shown are the boundaries of achievable distortion-classification regions for three representations: the minimal distortion point \( (D_1, C_1) \), where \( R(D_1, C_1) = R(D_1, \infty) \); the midpoint \( (D_2, C_2) \); and the minimal classification loss point \( (D_3, C_3) \).}
\label{Fig:Universal_GernalSource_CDR}
\end{figure}

To further analyze the structure of the achievable region, consider a point \( (D, C) \) on the distortion-classification trade-off curve at a fixed rate \( R \), with an associated optimal reconstruction \( Z = \hat{X}_{D,C} \) satisfying \( I(X; \hat{X}_{D,C}) = R \), \( \mathbb{E}[\|X - \hat{X}_{D,C}\|^2] = D \), and \( H(S | \hat{X}_{D,C}) = C \). Assuming that such an optimal reconstruction exists for every point on the trade-off curve and that any further reduction in either \( D \) or \( C \) would violate the rate constraint, it follows that the point \( (D, C) \) lies on the boundary of the closure \( \mathrm{cl}(\Omega(p_{\hat{X}_{D,C} \mid X})) \).

By Theorem~\ref{Theorem_general_universality}, the closure of the achievable region includes two extreme points: the upper-left point \( (D^{(a)}, C^{(a)}) \), which minimizes distortion, and the lower-right point \( (D^{(b)}, C^{(b)}) \), which minimizes classification loss. Both points are realized by the universal encoder trained at the operating point \( (D_1, C_1) \). The region is convex and contains all intermediate pairs. Figure~\ref{Fig:Universal_GernalSource_CDR} illustrates both \( \Omega(R) \) and the achievable region \( \Omega(p_{\hat{X}_{D,C} \mid X}) \) for representative points on the trade-off curve. The following theorem quantifies this structure.

\begin{theorem}\label{Theorem_Quantitative_Results}
Let \( \hat{X}_{D_1, C_1} \) denote the optimal reconstruction at point \( (D_1, C_1) \) on the conventional RDC trade-off curve, satisfying \( I(X; \hat{X}_{D_1, C_1}) = R(D_1, C_1) \). Then the upper-left extreme point of \( \Omega(p_{\hat{X}_{D_1, C_1} \mid X}) \) satisfies $(D^{(a)}, C^{(a)}) = (D_1, C_1)$. Now consider the lower-right extreme points: \( (D^{(b)}, C^{(b)}) \in \Omega(p_{\hat{X}_{D_1, C_1} \mid X}) \) and \( (D_3, C_3) \in \Omega(R) \), where \( C_3 = C_{\min} \) and \( R(D_3, C_3) = R(D_1, \infty) \). The distortion gap between these points is bounded below by:
\begin{equation}
D_3 - D^{(b)} \! \geq \! \sigma_X^2 + \sigma_{\hat{X}_{D_3, C_3}}^2  \!\! - 2 \sigma_{\hat{X}_{D_3, C_3}} \sqrt{\sigma_X^2 - D_1} - 2D_1,
\end{equation}
and the corresponding distortion ratio satisfies:
\begin{equation}
\frac{D_3}{D^{(b)}} \geq 
\frac{\sigma_X^2 + \sigma_{\hat{X}_{D_3, C_3}}^2 
- 2 \sigma_{\hat{X}_{D_3, C_3}} \sqrt{\sigma_X^2 - D_1}}{2D_1}.
\end{equation}

In the case where \( W_2^2(p_X, p_{\hat{X}_{D_3, C_3}}) = 0 \), i.e., \( \sigma_X^2 = \sigma_{\hat{X}_{D_3, C_3}}^2 \), the distortion gap becomes small under:
\begin{equation}
\label{Distortion_Gap1}
D_3 - D^{(b)} \approx 0 \;\; \text{if} \;\; D_1 \approx 0 \text{ or } D_1 \approx \sigma_X^2,
\end{equation}
\begin{equation}
\label{Distortion_Gap2}
\frac{D_3}{D^{(b)}} \approx 1 \;\; \text{if} \;\; D_1 \approx \sigma_X^2.
\end{equation}
\end{theorem}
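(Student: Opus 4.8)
The plan is to build everything on Theorem~\ref{Theorem_general_universality}, specializing the universal encoder to the representation $Z = \hat{X}_{D_1,C_1}$ and then comparing its achievable region against the conventional region $\Omega(R)$ at the same rate $R = R(D_1,C_1)$. First I would verify the claim $(D^{(a)}, C^{(a)}) = (D_1, C_1)$: since $Z = \hat{X}_{D_1,C_1}$ is already the optimal (jointly optimal encoder-decoder) reconstruction, the MMSE estimator $\tilde{X} = \mathbb{E}[X\mid Z]$ coincides with $\hat{X}_{D_1,C_1}$ itself (by optimality it must equal its own conditional mean, up to the usual MMSE argument), so $\mathbb{E}\|X - \tilde{X}\|^2 = D_1$ and the classification-loss term in the $(D^{(a)},C^{(a)})$ formula reduces to $H(S\mid \tilde{X}) = H(S\mid \hat{X}_{D_1,C_1}) = C_1$. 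That pins the upper-left extreme point.

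Next I would handle the lower-right point. By Theorem~\ref{Theorem_general_universality}, $(D^{(b)}, C^{(b)}) = \bigl(D_1 + W_2^2(p_{\tilde{X}}, p_{\hat{X}^{C_{\min}}}),\, C_{\min}\bigr)$, where $\hat{X}^{C_{\min}}$ solves \eqref{C_min} with $D = D_1$. On the conventional side, $(D_3, C_3)$ lives on $\Omega(R)$ with $C_3 = C_{\min}$ and $R(D_3,C_3) = R(D_1,\infty)$. The key is to lower bound $D_3 - D^{(b)} = D_3 - D_1 - W_2^2(p_{\tilde X}, p_{\hat X^{C_{\min}}})$. I would upper bound the Wasserstein penalty using the elementary fact that for scalar random variables $W_2^2(p_U, p_V) \leq \mathbb{E}[U^2] + \mathbb{E}[V^2] - 2\sqrt{\mathbb{E}[U^2]\,\mathbb{E}[V^2]}$ only when means match, or more robustly bound it by a product-of-standard-deviations expression; here $p_{\tilde X}$ has variance $\sigma_X^2 - D_1$ (since $\mathrm{Var}(\tilde X) = \sigma_X^2 - \mathbb{E}\|X - \tilde X\|^2$ for the MMSE estimator with matched mean), and $\hat X^{C_{\min}}$ can be related to $\hat X_{D_3, C_3}$ with variance $\sigma_{\hat X_{D_3,C_3}}^2$. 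Substituting $W_2^2 \leq \sigma_X^2 - D_1 + \sigma_{\hat X_{D_3,C_3}}^2 - 2\sigma_{\hat X_{D_3,C_3}}\sqrt{\sigma_X^2 - D_1}$ and $D_3 \leq$ (something) — actually I need $D_3 \geq$ a matching quantity, which follows since $\hat X^{C_{\min}}$ achieving $C_{\min}$ under the constraint $\mathbb{E}\|X-\hat X\|^2 \le D_1$ forces $D_3$ (the minimal distortion at rate $R$ and classification $C_{\min}$) to satisfy $D_3 \ge \mathbb{E}\|X - \hat X^{C_{\min}}\|^2$; expanding $\mathbb{E}\|X - \hat X^{C_{\min}}\|^2$ via $X = \tilde X + (X - \tilde X)$ and using orthogonality gives the stated bound after collecting terms. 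The distortion-ratio inequality then follows by dividing the same two lower/upper estimates, with the $2D_1$ in the denominator coming from $D^{(b)} = D_1 + W_2^2 \leq 2D_1 + (\text{variance terms})$ — I would be careful to track which direction each inequality points so the final ratio bound is genuinely a lower bound.

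Finally, the approximate statements \eqref{Distortion_Gap1} and \eqref{Distortion_Gap2} are corollaries of the closed-form bounds under $\sigma_X^2 = \sigma_{\hat X_{D_3,C_3}}^2$: plugging this in, the lower bound on $D_3 - D^{(b)}$ becomes $2\sigma_X^2 - 2\sigma_X\sqrt{\sigma_X^2 - D_1} - 2D_1$, which I would Taylor-expand as $D_1 \to 0$ (it vanishes to first order since $\sigma_X\sqrt{\sigma_X^2 - D_1} \approx \sigma_X^2 - D_1/2$, giving $2\sigma_X^2 - 2\sigma_X^2 + D_1 - 2D_1 = -D_1 \to 0$) and evaluate directly at $D_1 = \sigma_X^2$ (giving $2\sigma_X^2 - 0 - 2\sigma_X^2 = 0$). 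For the ratio, at $D_1 = \sigma_X^2$ the numerator is $2\sigma_X^2$ and the denominator $2D_1 = 2\sigma_X^2$, so the bound is exactly $1$, and since the ratio is always $\geq 1$ trivially (as $D_3 \geq D^{(b)}$ when the encoder is universal-optimal), we conclude $D_3/D^{(b)} \approx 1$.

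I expect the main obstacle to be the bookkeeping in the middle paragraph: controlling $W_2^2(p_{\tilde X}, p_{\hat X^{C_{\min}}})$ in terms of the variance of $\hat X_{D_3,C_3}$ rather than $\hat X^{C_{\min}}$ requires relating these two distributions — they solve similar but not identical optimization problems (one at rate $R$ on the conventional curve, one as a pure Wasserstein-type projection of $\tilde X$). I would argue that because $\hat X_{D_3,C_3}$ is feasible for the $C_{\min}$-achieving problem at distortion level $D_1$ only in the limiting/extremal regime, or alternatively invoke that at the minimal-classification extreme both reductions coincide; making this identification rigorous, and ensuring the scalar Wasserstein bound $W_2^2 \le (\sigma_U - \sigma_V)^2 + (\mu_U - \mu_V)^2$ is applied with the correct means (both estimators are unbiased, so means equal $\mu_X$ and the cross term drops), is where the care is needed. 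The rest — the MMSE identities, orthogonality expansions, and the final Taylor/substitution arguments — is routine.
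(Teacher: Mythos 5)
Your treatment of the upper-left point and the final limiting computations is fine, but the core of the argument --- the two quantitative inequalities --- does not go through as proposed, for two reasons. First, several of your inequalities point the wrong way. Since $D_3$ is the \emph{minimal} distortion achievable at rate $R(D_1,\infty)$ under the constraint $H(S|\hat X)\le C_{\min}$, and $\hat X^{C_{\min}}$ (reached through the fixed encoder) is one feasible scheme at that rate, the correct relation is $D_3 \le \mathbb{E}\|X-\hat X^{C_{\min}}\|^2 = D^{(b)}$, not $D_3 \ge \mathbb{E}\|X-\hat X^{C_{\min}}\|^2$ as you claim; likewise $D_3/D^{(b)}\le 1$ rather than ``trivially $\ge 1$.'' Second, the scalar Wasserstein estimate you invoke, $W_2^2(p_U,p_V)\le(\sigma_U-\sigma_V)^2+(\mu_U-\mu_V)^2$, is in fact a \emph{lower} bound (the monotone coupling gives $W_2^2=\sigma_U^2+\sigma_V^2-2\,\mathrm{Cov}+(\mu_U-\mu_V)^2$ with $\mathrm{Cov}\le\sigma_U\sigma_V$), so it cannot be used to upper bound $D^{(b)}=D_1+W_2^2(p_{\tilde X},p_{\hat X^{C_{\min}}})$. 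With both directions reversed, your route produces the right algebraic expression $\sigma_X^2+\sigma_{\hat X_{D_3,C_3}}^2-2\sigma_{\hat X_{D_3,C_3}}\sqrt{\sigma_X^2-D_1}$ but attached to the wrong quantity ($D^{(b)}$ instead of $D_3$) and with the wrong sense.

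The paper's proof rests on two ingredients you do not identify. (i) A converse lower bound on $D_3$: from the data-processing inequality, $I(X;\mathbb{E}[X|\hat X_{D_3,C_3}])\le I(X;\hat X_{D_3,C_3})=R(D_1,\infty)$, hence $\mathbb{E}\|X-\mathbb{E}[X|\hat X_{D_3,C_3}]\|^2\ge D_1$ by the rate--distortion converse; combining this with the orthogonality-principle upper bound $\mathbb{E}\|X-\mathbb{E}[X|\hat X_{D_3,C_3}]\|^2\le\sigma_X^2-\tfrac{(\sigma_X^2+\sigma_{\hat X_{D_3,C_3}}^2-D_3)^2}{4\sigma_{\hat X_{D_3,C_3}}^2}$ (best linear predictor) and solving the resulting quadratic gives $D_3\ge\sigma_X^2+\sigma_{\hat X_{D_3,C_3}}^2-2\sigma_{\hat X_{D_3,C_3}}\sqrt{\sigma_X^2-D_1}$. (ii) The Blau--Michaeli-type bound $D^{(b)}\le 2D_1$, which supplies the $2D_1$ terms in both stated inequalities. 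Subtracting (ii) from (i) yields the gap bound, and dividing yields the ratio bound. Without these two steps --- in particular without any mechanism for lower-bounding $D_3$ via the rate constraint --- the proposal cannot establish the theorem.
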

\begin{proof}
The proof is provided in Appendix~\ref{Appendix_Proof_Quantitative_Results}.
\end{proof}

%===========================================================================================
\section{Experimental Results}
\subsection{Gaussian Zero-Rate Penalty}
\begin{figure}[h] 
\centering
\includegraphics[width=0.42\textwidth]{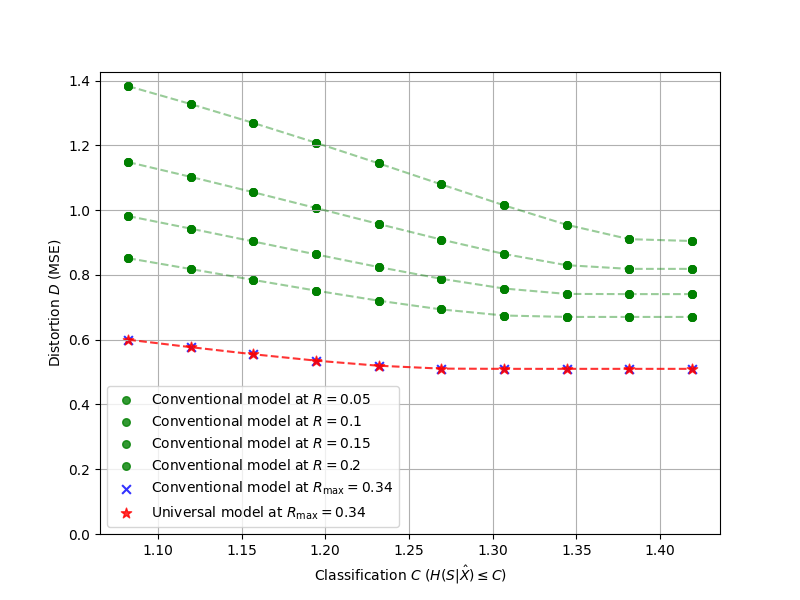}
\caption{CDR functions for a Gaussian source.}
\label{fig:DCR_Gaussian_NoPenalty}
\end{figure}

This section verifies that no rate penalty arises when replacing multiple distortion-classification specific Gaussian encoders (i.e. conventional model) with a single universal encoder (i.e., universal model). We consider a scalar Gaussian source \( X \sim \mathcal{N}(0, 1) \) and classification variable \( S \sim \mathcal{N}(0, 1) \) with correlation \( \rho = 0.7 \), yielding a maximum rate of \( R_{\max} = 0.34 \). 

The conventional model is evaluated at rates \( [0.05, 0.1, 0.15, 0.2, 0.34] \), with each rate generating a distinct set of \((C, D)\) tradeoffs via Theorem~\ref{TheoremDCR_GS}. In contrast, the universal model uses a fixed encoder at \( R_{\max} \) (by Theorem~\ref{Theorem_gaussian_universality}) and varies the decoder to explore achievable \((C, D)\) pairs.

Figure~\ref{fig:DCR_Gaussian_NoPenalty} illustrates that the universal model closely traces the boundary of the conventional model, thereby confirming that a single encoder is sufficient to achieve the entire classification-distortion tradeoff for Gaussian sources without incurring any rate penalty.

%-------------------------------------------------------------------------------------------
\subsection{Universal Representation for Lossy Compression}
\begin{figure}[!htbp]
\centering
\includegraphics[width=0.43\textwidth]{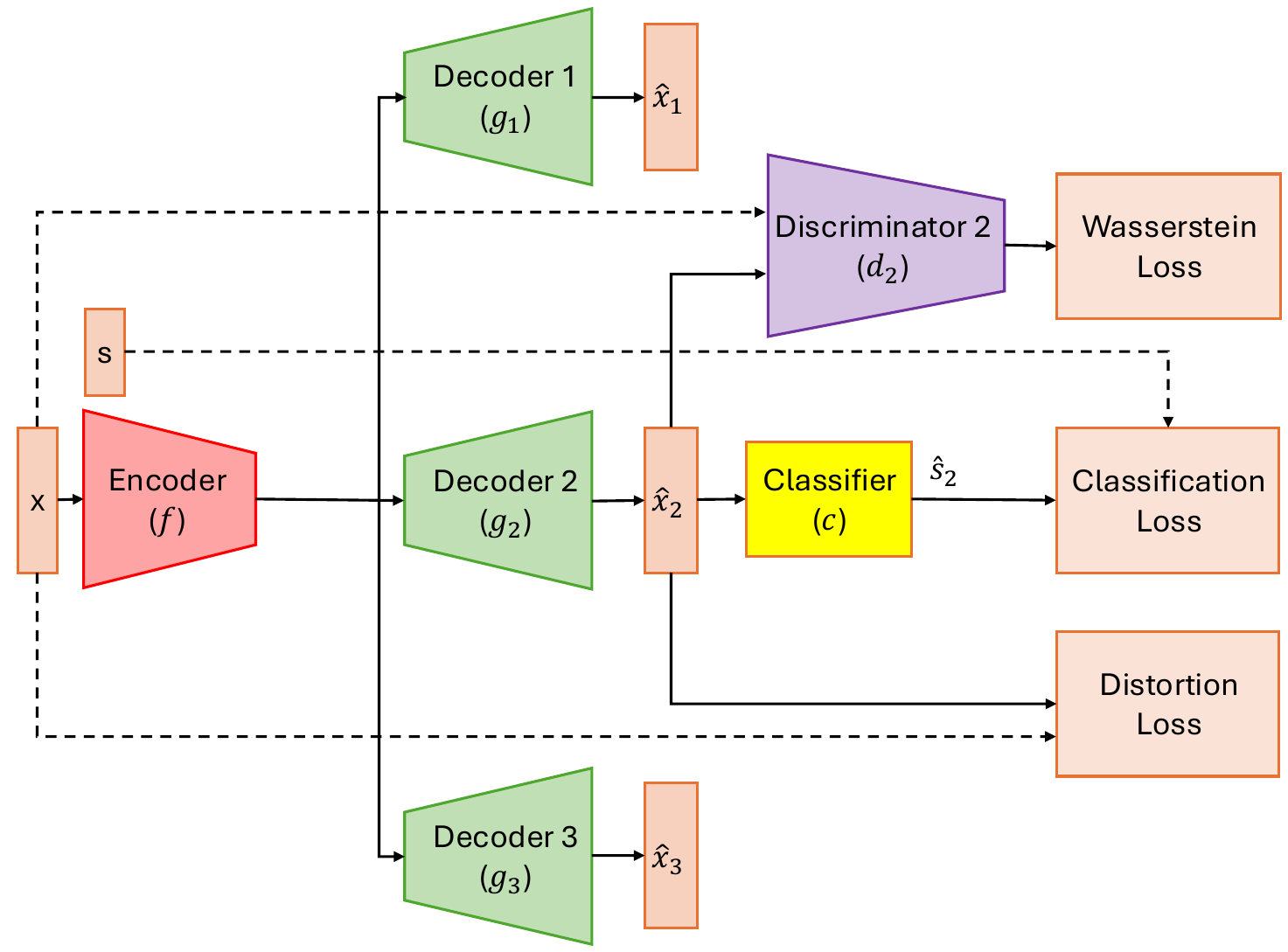}
\caption{An illustration of the universal RDC scheme.}
\label{fig:Scheme}
\end{figure}

We empirically validate our theory using a deep learning-based image compression framework on MNIST dataset, showing that the distortion gap between conventional and universal models aligns with our theoretical predictions.

%-------------------------------------------------------------------------------------------
\subsubsection{Training}
We use a stochastic autoencoder with a pre-trained classifier and GAN-based discriminator, consisting of an encoder \( f \), decoder \( g \), classifier \( c \), and discriminator \( d \), as shown in Figure~\ref{fig:Scheme}. In the conventional setup, \( f \), \( g \), and \( d \) are trainable. The distortion loss is measured by MSE. The output \( \hat{X} \) is passed through the classifier \( c \) to produce the predicted label distribution \( \hat{S} \), with classification loss computed via cross-entropy \( \text{CE}(S, \hat{S}) \), an upper bound on conditional entropy~\cite{boudiaf2021unifying_cross_entropy}. The compression rate is upper bounded by \( h \times \log_2(L) \), where \( h \) is the encoder output size and \( L \) the quantization level. 

To ensure that the condition \( W_2^2(p_X, p_{\hat{X}_{D_3, C_3}}) = 0 \) in Theorem~\ref{Theorem_Quantitative_Results} is satisfied, we augment the training loss with a squared 2-Wasserstein distance regularization term, rather than relying solely on MSE and cross-entropy losses. Following the approach in~\cite{UniversalRDPs} and the fact that Wasserstein-2 loss can be bounded by Wasserstein-1 up to a factor (see Appendix B in \cite{lyu2023barycentric}), we estimate the 2-Wasserstein distance using a discriminator \( d \) that takes both \( X \) and \( \hat{X} \) as inputs and employs the Wasserstein-1 loss in a GAN-based framework. The complete formulation of the loss function is provided below.
\begin{equation}
\label{eqn:experimental_loss}
\mathcal{L} = \lambda_d \, \mathbb{E}[\|X - \hat{X}\|^2] + \lambda_c \, \text{CE}(S, \hat{S}) + \lambda_p \, W_1(p_X, p_{\hat{X}}),
\end{equation}
where \( \lambda_d \), \( \lambda_c \), and \( \lambda_p \) controlling the trade-off.

To construct the universal model, the trained encoder \( f \) is frozen, and a new decoder \( g_1 \) and discriminator \( d_1 \) are trained using:
\begin{equation}
\label{eqn:experimental_loss2}
\mathcal{L}_1 = \lambda_d^1 \, \mathbb{E}[\|X - \hat{X}_1\|^2] + \lambda_c^1 \, \text{CE}(S, \hat{S}) + \lambda_p^1 \, W_1(p_X, p_{\hat{X}}),
\end{equation}
where \( \lambda_d^1 \), \( \lambda_c^1 \), and \( \lambda_p^1 \) adjust the task-specific trade-offs.

%-------------------------------------------------------------------------------------------\subsubsection{Results}
\begin{figure}[h] 
\centering
\includegraphics[width=0.40\textwidth]{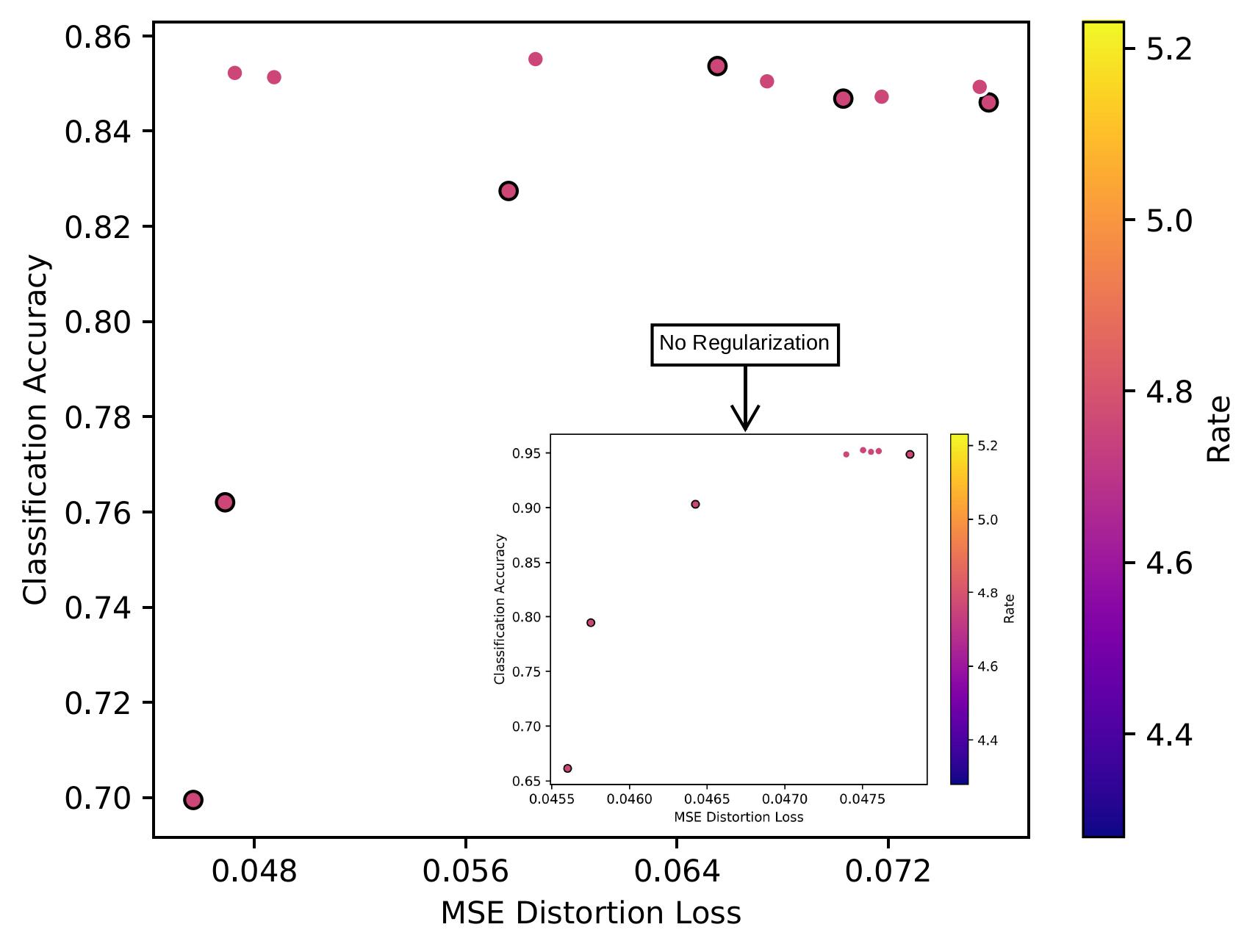}
\caption{RDC function at \( R = 4.75 \) on MNIST dataset.}
\label{fig:CDR_Comparision_Perception_MNIST}
\end{figure}
Figure~\ref{fig:CDR_Comparision_Perception_MNIST} shows the RDC tradeoff on the MNIST dataset at a fixed rate \( R = 4.75 \), obtained by varying loss coefficients. Black-outlined points represent the conventional model trained jointly for specific classification-distortion objectives. Other points correspond to the universal model with decoders trained on a fixed encoder optimized for low classification loss \( C \). As expected, in conventional models under a fixed rate constraint, reducing the cross-entropy loss (equivalently, improving classification accuracy) results in increased distortion. This observation highlights the inherent tradeoff between classification performance and reconstruction fidelity. In addition, despite using a fixed encoder, the universal model achieves distortion levels comparable to the conventional model, confirming that an encoder trained for low \( C \) can still support diverse tradeoffs through decoder retraining. These observations support the validity of Theorem~\ref{Theorem_Quantitative_Results}.

However, a noticeable classification gap remains: universal decoders cannot recover low classification (\( C \)) performance if the encoder is trained only for high-distortion objectives. This highlights the decoder's limited generative capacity when the encoder fails to preserve classification-task information.

%===========================================================================================
\section{Conclusion}
We proposed a universal RDC framework that enables a single encoder to support multiple task objectives through specialized decoders, removing the need for separate encoders per distortion-classification tradeoff. For the Gaussian source with MSE distortion, we proved that the full RDC region is achievable with zero rate penalty using a fixed encoder. For general source, we characterized the achievable region using MMSE estimation and the 2-Wasserstein distance, identifying conditions under which encoder reuse incurs negligible distortion penalty. Empirical results on the MNIST dataset support our theory, showing that universal encoders, trained with Wasserstein loss regularization, achieve distortion performance comparable to task-specific models. These findings highlight the practicality and effectiveness of universal representations for multi-task lossy compression.

%===========================================================================================
\newpage
\bibliographystyle{IEEEtran}
\bibliography{main}

%===========================================================================================
\clearpage
\setcounter{page}{1}
\newpage
\appendix

%===========================================================================================
\subsection{Proof of Theorem \ref{TheoremDCR_GS}}\label{Appendix_Proof_DCR_GS}
Consider the distortion-classification-rate function \( D(C, R) \) under the MSE distortion criterion as follows
\begin{mini!}|s|[2] % mini! = minimize
{p_{\hat{X}|X}} % optimization variable
{\mathbb{E}[(X-\hat{X})^2]} % objective function
{} % label for optimization problem
{D(C,R) =} % optimization result} % optimization result
\addConstraint{I(X;\hat{X})}{\leq R}{} % constraint 1
\addConstraint{h(S | \hat{X})}{\leq C.}{} % constraint 3
\end{mini!}
where \( (X, S) \) are jointly Gaussian random variables with covariance \( \mathrm{Cov}(X, S) = \theta_1 \). The optimal solution is attained when \( \hat{X} \) is also Gaussian and jointly distributed with \( X \) \cite{Wang2024, UniversalRDPs, IBGaussian}. Indeed, we can replace any random variable $\hat{X}$ with a Gaussian random variable $\hat{X}_G$, having same mean and variance as $X$, such that (a) $\mathbb{E}[(X-\hat{X})^2] \geq \mathbb{E}[(X-\hat{X}_G)^2]$, (b) $I(X;\hat{X}) \geq I(X;\hat{X}_G)$, and (c) $h(S | \hat{X}) \geq h(S | \hat{X}_G)$. The proof for claims (a) and (b) can be found in the proof of Theorem 1 in \cite{UniversalRDPs}, and claim (c) is inherited from the entropy power inequality in \cite{IBGaussian, berger1999semi}. Thus, the optimization reduces to a parameter search over the mean \( \mu_{\hat{X}} \), variance \( \sigma_{\hat{X}}^2 \), and covariance \( \mathrm{Cov}(X, \hat{X}) = \theta_2 \).

By applying the closed-form expressions for differential entropy and mutual information of jointly Gaussian variables~\cite{cover1999elements}, we obtain:
\begin{equation}
I(X;\hat{X}) = -\frac{1}{2}\log\left(1 - \frac{\theta_2^2}{\sigma_X^2 \sigma_{\hat{X}}^2}\right),
\label{rateExpression}
\end{equation}
And for the classification constraint:
\begin{equation*}
\begin{split}  
h(S|\hat{X}) &= h(S) - I(S; \hat{X}) \leq C, \\
I(S; \hat{X}) &\geq h(S) - C, \\
-\frac{1}{2} \log\left(1 - \frac{\theta_1^2}{\sigma_S^2 \sigma_X^4} \times \frac{\theta_2^2}{\sigma_{\hat{X}}^2}\right) &\geq h(S) - C.
\end{split}
\end{equation*}

Additionally, the mean squared error between \( X \) and \( \hat{X} \) can be expressed as \cite{UniversalRDPs}:
\begin{equation}
\mathbb{E}[(X - \hat{X})^2] = (\mu_X - \mu_{\hat{X}})^2 + \sigma_X^2 + \sigma_{\hat{X}}^2 - 2 \theta_2.
\end{equation}

Then, the $D(C,R)$ problem can be formulated as:
\begin{mini!}|s|[2] % mini! = minimize
{\mu_{\hat{X}},\sigma_{\hat{X}},\theta_2} % optimization variable
{\!\!\!\!\!(\mu_X-\mu_{\hat{X}})^2 +\sigma_X^2+\sigma_{\hat{X}}^2-2\theta_2} % objective function
{\label{DCR}} % label for optimization problem
{D(C,R) \! = \!\!\!} % optimization result} % optimization result
\addConstraint{\!\!\!\!\!\!\!\!\!\!\!\!\!\!\!\! -\frac{1}{2}\log\left(1 - \frac{\theta_2^2}{\sigma_X^2 \sigma_{\hat{X}}^2}\right) \leq R}{\label{DCR_I}} % constraint 1
\addConstraint{\!\!\!\!\!\!\!\!\!\!\!\!\!\!\!\! -\frac{1}{2}\log\left( 1-\frac{\theta_1^2}{\sigma_S^2\sigma_X^4} \frac{\theta_2^2}{\sigma_{\hat{X}}^2} \right)}{\geq h(S) - C.}{\label{DCR_C}} % constraint 3
\end{mini!}

Since both the rate constraint~(\ref{DCR_I}) and the classification constraint~(\ref{DCR_C}) are independent of the mean \(\mu_{\hat{X}}\), and $\theta_2$ only depends on the variance of $X$ and $\hat{X}$ when $X$ and $\hat{X}$ are Gaussian distributions, the objective function is minimized when the means match, i.e., $(\mu_X - \mu_{\hat{X}})^2 + \sigma_X^2 + \sigma_{\hat{X}}^2 - 2\theta_2 \geq \sigma_X^2 + \sigma_{\hat{X}}^2 - 2\theta_2$, we let \(\mu_X = \mu_{\hat{X}}\) in the subsequent derivations.

To ensure that the mutual information expression in~\eqref{DCR_I} is well-defined, it is necessary that \( 1 - \frac{\theta_2^2}{\sigma_X^2 \sigma_{\hat{X}}^2} > 0 \), i.e., \( \frac{\theta_2^2}{\sigma_{\hat{X}}^2} < \sigma_X^2 \). Under this condition, the mutual information between \( S \) and \( \hat{X} \) is upper bounded as
\begin{align*}
\begin{split}
I(S; \hat{X}) &= -\frac{1}{2}\log\left(1 - \frac{\theta_1^2}{\sigma_S^2 \sigma_X^4} \times \frac{\theta_2^2}{\sigma_{\hat{X}}^2} \right),\\ 
&\leq -\frac{1}{2} \log\left(1 - \frac{\theta_1^2}{\sigma_S^2 \sigma_X^2} \right),   
\end{split}
\end{align*}
which implies that constraint~\eqref{DCR_C} becomes infeasible if \( C < \frac{1}{2} \log\left(1 - \frac{\theta_1^2}{\sigma_S^2 \sigma_X^2}\right) + h(S) \). Therefore, to guarantee feasibility, we assume throughout that
\begin{equation*}
C \geq \frac{1}{2} \log\left(1 - \frac{\theta_1^2}{\sigma_S^2 \sigma_X^2} \right) + h(S).
\end{equation*}

Since the function \( D(C, R) \) is convex and monotonically non-increasing in both \( C \) and \( R \), the optimization problem~\eqref{DCR} can be effectively solved using the Karush-Kuhn-Tucker (KKT) conditions. Our approach systematically explores all possible combinations of active and inactive rate and classification constraints to characterize the optimal solution.

\myheading{Case 1.} Constraint~(\ref{DCR_I}) is active and constraint~(\ref{DCR_C}) is inactive.

Recall the classical Shannon rate-distortion function for a Gaussian source \( X \sim \mathcal{N}(\mu_X, \sigma_X^2) \)~\cite{cover1999elements}:
\begin{equation*}\label{eqn:rd_gaussian}
    R(D)={\begin{cases}{\frac {1}{2}}\log (\frac{\sigma _{X}^{2}}{D}),& 0\leq D\leq \sigma _{X}^{2}\\
    0,& D>\sigma _{X}^{2}.\end{cases}}
\end{equation*}
And,
\begin{equation*}\label{eqn:dr_gaussian}
    D(R)=\sigma_X^2 e^{-2R},
\end{equation*}
with the optimal solution attained by some \( p_{\hat{X}|X} \) where \( \hat{X} \sim \mathcal{N}(\mu_X, \sigma_X^2 - \sigma_X^2 e^{-2R}) \). In this case, we have \( D(C, R) = \sigma_X^2 e^{-2R} \), achieved by choosing \( \sigma_{\hat{X}}^2 = \sigma_X^2 - \sigma_X^2 e^{-2R} \) when the rate constraint~\eqref{DCR_I} is active. It implies that: 
\begin{equation*}
    -\frac{1}{2}\log\left(1 - \frac{\theta_2^2}{\sigma_X^2 \sigma_{\hat{X}}^2}\right) = R \Rightarrow \theta_2 = \sigma_X^2 - \sigma_X^2 e^{-2R}.
\end{equation*}

The constraint~(\ref{DCR_C}) is not active if
\begin{align*}
    -\frac{1}{2}\log\left( 1-\frac{\theta_1^2}{\sigma_S^2\sigma_X^4} \frac{\theta_2^2}{\sigma_{\hat{X}}^2} \right) > h(S) - C.
\end{align*}

Hence,
\begin{align*}
C  > \frac{1}{2}\log\left( 1-\frac{\theta_1^2(\sigma_X^2 - \sigma_X^2 e^{-2R})}{\sigma_S^2\sigma_X^4} \right) + h(S). 
\end{align*}

Therefore, $D(C,R) = \sigma_X^2 e^{-2R}$ if $C  > \frac{1}{2}\log\left( 1-\frac{\theta_1^2(\sigma_X^2 - \sigma_X^2 e^{-2R})}{\sigma_S^2\sigma_X^4} \right) + h(S)$.

\myheading{Case 2.}  Constraint~(\ref{DCR_I}) is inactive and constraint~(\ref{DCR_C}) is active.

The classification constraint~\eqref{DCR_C} is active, infer that:
\begin{equation*}
    -\frac{1}{2}\log \left(1-\frac{\theta_1^2}{\sigma_S^2\sigma_X^4} \frac{\theta_2^2}{\sigma_{\hat{X}}^2} \right) = h(S) - C, 
\end{equation*}
\begin{equation}
   \Rightarrow \frac{\theta_2^2}{\sigma_{\hat{X}}^2} = \frac{\sigma_S^2 \sigma_X^4}{\theta_1^2} \left(1 - e^{-2h(S) + 2C}\right). 
    \label{incorporatingC1}   
\end{equation}

We choose $\sigma_{\hat{X}}^2 = \theta_2 = \frac{\sigma_S^2 \sigma_X^4}{\theta_1^2} \left(1 - e^{-2h(S) + 2C}\right)$ and substitute into the distortion expression:
\begin{align*}
    \mathbb{E}[(X - \hat{X}_G)^2] 
    &= \sigma_X^2 + \sigma_{\hat{X}}^2 - 2 \theta_2, \\
    &= \sigma_X^2 - \frac{\sigma_S^2 \sigma_X^4}{\theta_1^2} \left(1 - e^{-2h(S) + 2C}\right).
\end{align*}

Substitute (\ref{incorporatingC1}) into the rate expression (\ref{rateExpression}), we get: 
\begin{align*}
    I(X; \hat{X}) = -\frac{1}{2} \log \left(1 - \frac{\sigma_S^2 \sigma_X^2}{\theta_1^2} \left(1 - e^{-2h(S) + 2C}\right)\right).
\end{align*}

So, the rate constraint~(\ref{DCR_I}) is inactive when 
\begin{equation*}
    -\frac{1}{2} \log \left(1 - \frac{\sigma_S^2 \sigma_X^2}{\theta_1^2} \left(1 - e^{-2h(S) + 2C}\right)\right) < R,
\end{equation*}
\begin{align*}
\Rightarrow C  < \frac{1}{2}\log\left( 1-\frac{\theta_1^2(\sigma_X^2 - \sigma_X^2 e^{-2R})}{\sigma_S^2\sigma_X^4} \right) + h(S).
\end{align*}

Therefore, $D(C,R) = \sigma_X^2 - \frac{\sigma_S^2 \sigma_X^4}{\theta_1^2} \left(1 - e^{-2h(S) + 2C}\right)$ if $C  < \frac{1}{2}\log\left( 1-\frac{\theta_1^2(\sigma_X^2 - \sigma_X^2 e^{-2R})}{\sigma_S^2\sigma_X^4} \right) + h(S)$.

\textbf{Case 3:} Both the rate constraint~(\ref{DCR_I}) and the classification constraint~\eqref{DCR_C} are active.

From case 2, we know that the classification constraint~\eqref{DCR_C} is active if 
\begin{equation*}
    \sigma_{\hat{X}}^2 = \theta_2 = \frac{\sigma_S^2 \sigma_X^4}{\theta_1^2} \left(1 - e^{-2h(S) + 2C}\right), 
\end{equation*}
and, 
\begin{align*}
    \mathbb{E}[(X - \hat{X})^2] = \sigma_X^2 - \frac{\sigma_S^2 \sigma_X^4}{\theta_1^2} \left(1 - e^{-2h(S) + 2C}\right).
\end{align*}

The rate constraint~(\ref{DCR_I}) is active if 
\begin{align*}
    I(X; \hat{X}) = -\frac{1}{2} \log \left(1 - \frac{\sigma_S^2 \sigma_X^2}{\theta_1^2} \left(1 - e^{-2h(S) + 2C}\right)\right) = R.
\end{align*}
\begin{align*}
\Rightarrow C  = \frac{1}{2}\log\left( 1-\frac{\theta_1^2(\sigma_X^2 - \sigma_X^2 e^{-2R})}{\sigma_S^2\sigma_X^4} \right) + h(S).
\end{align*}

Therefore, $D(C,R) = \sigma_X^2 - \frac{\sigma_S^2 \sigma_X^4}{\theta_1^2} \left(1 - e^{-2h(S) + 2C}\right)$ if $C  = \frac{1}{2}\log\left( 1-\frac{\theta_1^2(\sigma_X^2 - \sigma_X^2 e^{-2R})}{\sigma_S^2\sigma_X^4} \right) + h(S)$.

\myheading{Case 4.}  Both constraint \eqref{DCR_I} and constraint \eqref{DCR_C} are inactive.

When \( C > h(S) \), implying that the classification constraint~\eqref{DCR_C} is inactive, and the rate \( R \) is sufficiently large such that \( R > h(X) \), meaning the rate constraint~\eqref{DCR_I} is also inactive, the minimum achievable distortion \( D(C, R) \) reaches its theoretical lower bound, i.e., \( D(C, R) = 0 \). This is achieved by setting \( \hat{X} = X \), which leads to zero reconstruction error, i.e., \( \mathbb{E}[(X - \hat{X})^2] = 0 \). Furthermore, all constraints are satisfied since \( I(X; \hat{X}) = h(X) < R \), and \( h(S | \hat{X}) = h(S | X) \leq h(S) < C \).

Therefore, $D(C,R) = 0$ if $ C > h(S) \text{ and } R > h(X)$. 

In summary, combining the four cases, the closed-form expression for the information distortion-classification-rate function \( D(C, R) \) under MSE distortion is given by~(\ref{Theorem2}).

\subsection{Proof of Theorem \ref{Theorem_gaussian_universality}}\label{Appendix_Proof_GS_Universality}
\begin{figure*}[htbp]
\begin{align}
\label{DC_LowerBoundary}
\begin{split}
&D=\sigma_X^2 - \frac{\sigma_S^2 \sigma_X^4}{\theta_1^2} \left(1 - e^{-2h(S) + 2C}\right),\\
&C\in\left[ \frac{1}{2} \log\left(1 - \frac{\theta_1^2}{\sigma_S^2 \sigma_X^2}\right) + h(S), \frac{1}{2}\log(1-\frac{\theta_1^2 (\sigma_X^2 - \sigma_X^2 e^{-2 R})}{\sigma_S^2\sigma_X^4}) +h(S)  \right).
\end{split}
\end{align}
\hrulefill
\vspace*{-4pt}
\end{figure*}
The proof method follows the approach presented in~\cite{UniversalRDPs}. Let \( R = \sup_{(D,C) \in \Theta} R(D,C) \). By definition, \( \Theta \subseteq \Omega(R) \), where \( \Omega(R) \) denotes the set of all achievable distortion-classification pairs at rate \( R \). The lower boundary of this region, the optimal tradeoff curve, is characterized by Equation~\eqref{DC_LowerBoundary}.

Every point in \( \Omega(R) \) is component-wise dominated by a point on this boundary. Consider a representation \( Z \) that is jointly Gaussian with \( X \), such that \( I(X; Z) = R \). This implies the squared correlation coefficient between \( X \) and \( Z \) satisfies \( \rho_{XZ}^2 = 1 - 2^{-2R} \) \cite{UniversalRDPs}, where 
\begin{equation*}
    \rho_{XZ}= \frac{\text{Cov}(X,Z)}{\sigma_X\sigma_Z} = \frac{\mathbb{E}[(X-\mu_X)(Z-\mu_Z)]}{\sigma_X\sigma_Z}.
\end{equation*}

For any point \( (D, C) \) on the boundary, define the corresponding reconstruction as:
\begin{equation*}
    \hat{X}_{D,C} = \mbox{sign}(\rho_{XZ})\gamma (Z-\mu_Z)+\mu_X,
\end{equation*}
where \( \gamma \) denotes a scaling coefficient and
\begin{align*}
    \mbox{sign}(\rho_{XZ}) = \begin{cases}
1, &\text{for } \rho_{XZ}\geq 0,\\
-1, &\text{for } \rho_{XZ} < 0.
\end{cases}
\end{align*}

With this construction, we have:
\begin{equation*}
\begin{split}
    \mu_{\hat{X}_{D,C}} &= \mathbb{E}[\mbox{sign}(\rho_{XZ})\gamma (Z-\mu_Z)+\mu_X], \\
    &= \mbox{sign}(\rho_{XZ}) \gamma (\mathbb{E}[Z] - \mathbb{E}[Z]) + \mu_X,\\
    &= \mu_{X}.
\end{split}
\end{equation*}

And,
\begin{equation*}
    \sigma_{\hat{X}_{D,C}} = \gamma^2 \sigma_{Z},
\end{equation*}
\begin{equation*}
    \text{Cov}(X, \hat{X}_{D,C})  = \gamma \text{Cov}(X, Z).
\end{equation*}

We now choose:
\begin{equation*}
\sigma_{\hat{X}_{D,C}} = \theta_2 = \frac{\sigma_S^2 \sigma_X^4 (1 - e^{-2 h(S) + 2C})}{\theta_1^2}.
\end{equation*}

Solving for \( \gamma \), we obtain: 
\begin{equation*}
    \gamma = \frac{\sigma_S \sigma_X^2 \sqrt{1-e^{-2 h(S) + 2C}}}{\theta_1 \sigma_Z}.
\end{equation*}

Hence,
\begin{equation*}
    \hat{X}_{D,C} = \mbox{sign}(\rho_{XZ}) \frac{\sigma_S \sigma_X^2 \sqrt{1-e^{-2 h(S) + 2C}}}{\theta_1 \sigma_Z} (Z-\mu_Z)+\mu_X.
\end{equation*}

We now verify that this reconstruction satisfies the distortion and classification constraints.

\myheading{Distortion constraint.}
\begin{align*}
\mathbb{E}[\| X-\hat{X}_{D,C} \|^2] &=\sigma^2_X+\sigma^2_{\hat{X}_{D,C}}-2\theta_2,\\
&= \sigma_X^2 - \frac{\sigma_S^2 \sigma_X^4}{\theta_1^2} \left(1 - e^{-2h(S) + 2C}\right),\\
&= D.
\end{align*} 

\myheading{Classification constraint.}
\begin{align*}
h(S|\hat{X}_{D,C}) &= h(S) - I(S|\hat{X}_{D,C}), \\
&= h(S) + \frac{1}{2} \log\left( 1 - \frac{\theta_1^2}{\sigma_S^2 \sigma_X^4 } \frac{\theta_2^2}{\sigma_{\hat{X}_{D,C}}^2} \right), 
\end{align*}
where  
\begin{equation*}
    \frac{\theta_2^2}{\sigma_{\hat{X}_{D,C}}^2} = \frac{\sigma_S^2 \sigma_X^4 (1 - e^{-2 h(S) + 2C})}{\theta_1^2}.
\end{equation*}

Substituting in, we get:
\begin{align*}
h(S|\hat{X}_{D,C}) &= h(S) \\ 
&+ \frac{1}{2} \log\left( 1 - \frac{\theta_1^2}{\sigma_S^2 \sigma_X^4 } \frac{\sigma_S^2 \sigma_X^4 (1 - e^{-2 h(S) + 2C})}{\theta_1^2} \right), \\
&= h(S) + \frac{1}{2} \log \left( e^{-2h(S) + 2C} \right),\\
&= C. 
\end{align*}

This confirms that for any given \( \text{Cov}(X, Z) \), one can always choose a scalar \( \gamma \) to generate \( \hat{X}_{D,C} \) satisfying both distortion and classification constraints. That is, every point \( (D, C) \in \Theta \) can be realized by applying an appropriate decoder to a fixed Gaussian representation \( Z \) of \( X \) such that \( I(X; Z) = \sup_{(D, C) \in \Theta} R(D, C) \). Therefore, \( \Omega(p_{Z|X}) = \Omega(R) \), which implies the rate penalty is zero: $A(\Theta) = I(X; Z) - R = 0$.

%===========================================================================================
\subsection{Proof of Theorem \ref{Theorem_general_universality}} \label{Appendix_Proof_General_Universality}
The proof approach is inspired by the main techniques introduced in~\cite{UniversalRDPs}. For any \( (D, C) \in \Omega(p_{Z|X}) \), there exists a reconstruction variable \( \hat{X}_{D,C} \) jointly distributed with \( (X, Z) \), such that the Markov chain \( X \leftrightarrow Z \leftrightarrow \hat{X}_{D,C} \) holds, the distortion satisfies \( \mathbb{E}[\Delta(X, \hat{X}_{D,C})] \leq D \), and the classification uncertainty is bounded by \( H(S | \hat{X}_{D,C}) \leq C \). 

Since $\tilde{X} = \mathbb{E}[X|Z]$, we have:
\begin{align*}
D\geq\mathbb{E}[\|X-\hat{X}_{D,C}\|^2]  =\mathbb{E}[\|X-\tilde{X}\|^2]+\mathbb{E}[\|\tilde{X}-\hat{X}_{D,C}\|^2].
\end{align*}

Now consider the Wasserstein-2 distance between the marginals \( p_{\tilde{X}} \) and \( p_{\hat{X}_{D,C}} \), which is defined as:
\[
W_2^2(p_{\tilde{X}}, p_{\hat{X}_{D,C}}) = \inf_{p_{\tilde{X}', \hat{X}'}} \mathbb{E}[\|\tilde{X}' - \hat{X}'\|^2],
\]
where \( \tilde{X}' \sim p_{\tilde{X}} \) and \( \hat{X}' \sim p_{\hat{X}_{D,C}} \). Since \( (\tilde{X}, \hat{X}_{D,C}) \) is one feasible coupling of these marginals,
\[
\mathbb{E}[\|\tilde{X} - \hat{X}_{D,C}\|^2] \geq W_2^2(p_{\tilde{X}}, p_{\hat{X}_{D,C}}).
\]

Therefore,
\[
D \geq \mathbb{E}[\|X - \tilde{X}\|^2] + W_2^2(p_{\tilde{X}}, p_{\hat{X}_{D,C}}).
\]

Since \( H(S | \hat{X}_{D,C}) \leq C \), the marginal distribution \( p_{\hat{X}_{D,C}} \) belongs to the constraint set \( \{ p_{\hat{X}} : H(S | \hat{X}) \leq C \} \). We have, $p_{\hat{X}_{D,C}}$ is one feasible distribution of the set $\left\{ p_{\hat{X}}:\;
\begin{aligned}
    &\inf_{p_{\hat{X}}}   W^2_2(p_{\tilde{X}},p_{\hat{X}}) \\
    &\text{s.t. } H(S|\hat{X}) \leq C
\end{aligned}
\right\}$, then
\[
W_2^2(p_{\tilde{X}}, p_{\hat{X}_{D,C}}) \geq 
\left\{
\begin{aligned}
&\inf_{p_{\hat{X}}} W_2^2(p_{\tilde{X}}, p_{\hat{X}}) \\
&\text{s.t. } H(S | \hat{X}) \leq C.
\end{aligned}
\right.
\]

This leads to the outer bound:
\begin{equation*}
\Omega(p_{Z|X}) \! \subseteq \! \left\{ \! (D,C) \! : \! D  \geq  \mathbb{E}[{\|X-\tilde{X}\|^2}] \!+\! \begin{aligned}
    &\inf_{p_{\hat{X}}}  W^2_2(p_{\tilde{X}},p_{\hat{X}}) \\
    &\text{s.t. } H(S|\hat{X}) \leq C.
\end{aligned} \! \right\}
\end{equation*}

Now, to show the approximate tightness of this bound, let \( (D', C') \) be any point in the above region. For any \( \epsilon > 0 \), there exists a distribution \( p_{\hat{X}'} \) such that:
\[
H(S |\hat{X}') \leq C', \quad D' + \epsilon \geq \mathbb{E}[\|X - \tilde{X}\|^2] + W_2^2(p_{\tilde{X}}, p_{\hat{X}'}).
\]

By the Markov condition, we can construct a random variable \( \hat{X}' \) such that \( X \leftrightarrow Z \leftrightarrow \hat{X}' \), and
\[
\mathbb{E}[\|\tilde{X} - \hat{X}'\|^2] \leq W_2^2(p_{\tilde{X}}, p_{\hat{X}'}) + \epsilon.
\]
Thus,
\[
\mathbb{E}[\|X - \hat{X}'\|^2] = \mathbb{E}[\|X - \tilde{X}\|^2] + \mathbb{E}[\|\tilde{X} - \hat{X}'\|^2] \leq D' + 2\epsilon.
\]

It follows that:
\begin{equation*}
\begin{split}
    \Omega(p_{Z|X}) \! &\subseteq \! \left\{ (D,C) : D \geq \mathbb{E}{\|X-\tilde{X}\|^2} \! + \! \begin{aligned}
    &\inf_{p_{\hat{X}}}   W^2_2(p_{\tilde{X}},p_{\hat{X}}) \\
    &\text{s.t. } H(S|\hat{X}) \leq C
\end{aligned} \right\} \\
    \! & \subseteq \! \mbox{cl}(\Omega(p_{Z|X})).
\end{split}
\end{equation*}

Now consider the characterization of conditional entropy:
\begin{align*}
    H(S|\hat{X})\! &=\!\! \sum_{s}\sum_{\hat{x}} p_{S,\hat{X}}\log\frac{1}{p_{S|\hat{X}}},\\
    &=\!\! \sum_{s}\sum_{\hat{x}} p_{\hat{X}} p_{S|\hat{X}}\log\frac{1}{p_{S|\hat{X}}}.
\end{align*}

By choosing \( \hat{X} = \tilde{X} \), it follows that \( p_{\hat{X}} = p_{\tilde{X}} \), which yields:
\begin{equation*}
     H(S|\hat{X}) = H(S|\tilde{X}) = \sum_{s}\sum_{\tilde{x}} p_{\tilde{X}} p_{S|\tilde{X}} \log\frac{1}{p_{S|\tilde{X}}},
\end{equation*}
and define:
\begin{equation*}
(D^{(a)},C^{(a)})=\left( \mathbb{E}[\|X-\tilde{X}\|^2], \sum_{s}\sum_{\tilde{x}} p_{\tilde{X}} p_{S|\tilde{X}} \log\frac{1}{p_{S|\tilde{X}}} \right).   
\end{equation*}

Next, define $\hat{X}^{C_{\text{min}}} \sim p_{\hat{X}^{C_{\text{min}}}}$:
\begin{mini!}|s|[2] % mini! = minimize
{p_{\hat{X}}} % optimization variable
{ H(S|\hat{X})} % objective function
{\label{C_min_Appendix}} % label for optimization problem
{ p_{\hat{X}^{C_{\text{min}}}} = \arg} % optimization result} % optimization result
\addConstraint{\mathbb{E}[\| X-\hat{X} \|^2]}{\leq D.} % constraint 1
%\addConstraint{I(X;Z) = R(\Theta).}% constraint 3
\end{mini!}
And,
\begin{equation*}
    C_{\min} = \sum_{s}\sum_{\hat{x}^{C_{\text{min}}}} p_{\hat{X}^{C_{\text{min}}}} p_{S|\hat{X}^{C_{\text{min}}}} \log\frac{1}{p_{S|\hat{X}^{C_{\text{min}}}}}.
\end{equation*}

Let $\hat{X} = \hat{X}^{C_{\text{min}}}$ implies \( p_{\hat{X}} = p_{\hat{X}^{C_{\text{min}}}} \), and define:
\[
(D^{(b)}, C^{(b)}) = \left( \mathbb{E}[\|X - \tilde{X}\|^2] + W_2^2(p_{\tilde{X}}, p_{\hat{X}^{C_{\text{min}}}}), C_{\min} \right).
\]

Thus, by selecting \( p_{\hat{X}} = p_{\tilde{X}} \) and \( p_{\hat{X}} =  p_{\hat{X}^{C_{\text{min}}}} \), we confirm that both \( (D^{(a)}, C^{(a)}) \) and \( (D^{(b)}, C^{(b)}) \) lie in this region: 
\[
\left\{ (D,C) : D \geq \mathbb{E}{\|X-\tilde{X}\|^2} + \begin{aligned}
    &\inf_{p_{\hat{X}}}  W^2_2(p_{\tilde{X}},p_{\hat{X}}) \\
    &\text{s.t. } H(S|\hat{X}) \leq C
\end{aligned} \right\}.
\]

%===========================================================================================
\subsection{Proof of Theorem \ref{Theorem_Quantitative_Results}}\label{Appendix_Proof_Quantitative_Results}
The proof idea follows the result in~\cite{UniversalRDPs}. We begin by noting that \( C_3 = C_{\min} \), and from the hypothesis \( R(D_3, C_3) = R(D_1, \infty) \). Next, observe that:
\begin{align*}
D_3&=\mathbb{E}[\|X-\hat{X}_{D_3,C_3}\|^2], \\
&= \sigma^2_X + \sigma_{\hat{X}_{D_3, C_3}}^2 -2 \text{Cov}(X, \hat{X}_{D_3,C_3}), \\
&=\sigma^2_X + \sigma_{\hat{X}_{D_3, C_3}}^2 -2\mathbb{E}[(X-\mu_X)^T(\hat{X}_{D_3,C_3}-\mu_{\hat{X}_{D_3, C_3}})].
\end{align*} 

Thus,
\begin{equation*}
\label{Eq_D3}
\mathbb{E}[(X-\mu_X)^T(\hat{X}_{D_3,C_3}-\mu_{\hat{X}_{D_3, C_3}}) = \frac{\sigma^2_X + \sigma_{\hat{X}_{D_3, C_3}}^2 - D_3}{2}.
\end{equation*}

We now utilize the inequality \( I(X; \mathbb{E}[X | \hat{X}_{D_3,C_3}]) \leq I(X; \hat{X}_{D_3,C_3}) = R(D_1, \infty) \), which implies: $\mathbb{E}[\|X-\mathbb{E}[X|\hat{X}_{D_3,C_3}]\|^2]\geq D_1$. 

Observe that $\hat{X}_{D_3,C_3} - \mu_{\hat{X}_{D_3, C_3}}$ has a certain correlation with $X - \mu_X$, so we can use a linear predictor idea to get a simpler upper bound. By the orthogonality principle:
\begin{equation*}
\begin{split}
&\mathbb{E}[\|X  -  \mathbb{E}[X | \hat{X}_{D_3,C_3}]\|^2]  \\ 
&\leq \mathbb{E}[\|X - \mu_X - c (\hat{X}_{D_3,C_3} - \mu_{\hat{X}_{D_3, C_3}})\|^2].
\end{split}
\end{equation*}
The inequality says the best predictor $\mathbb{E}[X | \hat{X}_{D_3,C_3}]$ is no worse (in MSE) than any fixed linear predictor of the form $\mu_X + c (\hat{X}_{D_3,C_3} - \mu_{\hat{X}_{D_3, C_3}})$, where
\begin{equation*}
\begin{split}
c \! &= \!\! \frac{\text{Cov}(X-\mu_X, \hat{X}_{D_3,C_3} - \mu_{\hat{X}_{D_3, C_3}})}{\text{Var}({\hat{X}_{D_3,C_3} - \mu_{\hat{X}_{D_3, C_3}}})},\\
& \!\! = \!\! \frac{\mathbb{E}[(X-\mu_X)^T(\hat{X}_{D_3,C_3}-\mu_{\hat{X}_{D_3, C_3}})]}{\sigma_{\hat{X}_{D_3, C_3}}^2} \\
& \!\! = \!\! \frac{\sigma^2_X + \sigma_{\hat{X}_{D_3, C_3}}^2 - D_3}{2\sigma_{\hat{X}_{D_3, C_3}}^2}.
\end{split}
\end{equation*}

Therefore, the MSE of the optimal conditional expectation is at most the MSE of this linear estimator.  
\begin{align*}
D_1&\leq\mathbb{E}[\|X-\mathbb{E}[X|\hat{X}_{D_3,C_3}]\|^2],\\
&\leq\mathbb{E}[\|X-\mu_X-c(\hat{X}_{D_3,C_3}-\mu_{\hat{X}_{D_3, C_3}})\|^2],\\
&= \mathbb{E}[\|X - \mu_X\|^2] + c^2 \mathbb{E}[\|\hat{X}_{D_3,P_3} - \mu_{\hat{X}_{D_3, C_3}}\|^2] \\
&- 2c \mathbb{E}[(X - \mu_X)^T (\hat{X}_{D_3,P_3} - \mu_{\hat{X}_{D_3, C_3}})],\\
&=\sigma^2_X + c^2 \sigma_{\hat{X}_{D_3, C_3}}^2 \!\!\!\! - 2c\mathbb{E}[(X-\mu_X)^T(\hat{X}_{D_3,C_3} \! - \! \mu_{\hat{X}_{D_3, C_3}})],\\
&=\sigma^2_X + \left( \frac{\sigma^2_X + \sigma_{\hat{X}_{D_3, C_3}}^2 - D_3}{2\sigma_{\hat{X}_{D_3, C_3}}^2} \right)^2 \sigma_{\hat{X}_{D_3, C_3}}^2 \\
&-2 \left( \frac{\sigma^2_X + \sigma_{\hat{X}_{D_3, C_3}}^2 - D_3}{2\sigma_{\hat{X}_{D_3, C_3}}^2} \right) \left( \frac{\sigma^2_X + \sigma_{\hat{X}_{D_3, C_3}}^2 - D_3}{2} \right),\\
&= \sigma^2_X - \frac{(\sigma^2_X + \sigma_{\hat{X}_{D_3, C_3}}^2 - D_3)^2}{4 \sigma_{\hat{X}_{D_3, C_3}}^2}.
\end{align*}

Rearranging terms yields:
\begin{equation*}
(\sigma^2_X + \sigma_{\hat{X}_{D_3, C_3}}^2 - D_3)^2 \leq 4 \sigma_{\hat{X}_{D_3, C_3}}^2 (\sigma^2_X - D_1).   
\end{equation*}

Under the assumptions \( \sigma_X^2 + \sigma_{\hat{X}_{D_3,C_3}}^2 - D_3 \geq 0 \) and \( \sigma_X^2 - D_1 \geq 0 \), it follows that:
\begin{equation*}
\begin{split}
\sigma^2_X + \sigma_{\hat{X}_{D_3, C_3}}^2 - D_3 \leq 2 \sigma_{\hat{X}_{D_3, C_3}} \sqrt{\sigma_X^2 - D_1},\\
D_3 \geq \sigma^2_X + \sigma_{\hat{X}_{D_3, C_3}}^2 - 2 \sigma_{\hat{X}_{D_3, C_3}} \sqrt{\sigma_X^2 - D_1}.
\end{split}
\end{equation*}

Based on~\cite{blau2018perception}, we can show that:
\begin{equation*}
D^{(b)} \leq D_3 \leq 2D_1.
\end{equation*}
Hence, 
\begin{align*}
&D_3 - D^{(b)}\geq \sigma^2_X + \sigma_{\hat{X}_{D_3, C_3}}^2 - 2 \sigma_{\hat{X}_{D_3, C_3}} \sqrt{\sigma_X^2 - D_1} - 2D_1,\\
&\frac{D_3}{D^{(b)}}\geq \frac{\sigma^2_X + \sigma_{\hat{X}_{D_3, C_3}}^2 - 2 \sigma_{\hat{X}_{D_3, C_3}} \sqrt{\sigma_X^2 - D_1}}{2D_1}.
\end{align*}

\underline{For $D_1 = 0$:}
\begin{align*}
D_3 - D^{(b)} &\geq \sigma^2_X + \sigma_{\hat{X}_{D_3, C_3}}^2 - 2 \sigma_{\hat{X}_{D_3, C_3}} \sigma_X,
\end{align*}

Moreover, in the case of $W_2^2(p_X, p_{\hat{X}_{D_3, C_3}}) = 0$ (i.e., $\sigma_X^2 = \sigma_{\hat{X}_{D_3, C_3}}^2$), we have:
\begin{equation*}
D_3 - D^{(b)}\stackrel{D_1\approx0 \hspace{1mm} \text{and} \hspace{1mm} \sigma_X^2 = \sigma_{\hat{X}_{D_3, C_3}}^2}{\approx}0, 
\end{equation*}

\underline{For $D_1 = \sigma_X^2$:}
\begin{align*}
&D_3 - D^{(b)} \geq \sigma^2_X + \sigma_{\hat{X}_{D_3, C_3}}^2 - 2\sigma^2_X,\\
&\frac{D_3}{D^{(b)}} \geq \frac{\sigma^2_X + \sigma_{\hat{X}_{D_3, C_3}}^2}{2 \sigma^2_X}.
\end{align*}

Again, in the case of $W_2^2(p_X, p_{\hat{X}_{D_3, C_3}}) = 0$, then:
\begin{align*}
&D_3 - D^{(b)}\stackrel{D_1\approx\sigma_X^2 \hspace{1mm} \text{and} \hspace{1mm} \sigma_X^2 = \sigma_{\hat{X}_{D_3, C_3}}^2}{\approx}0, \\
&\frac{D_3}{D^{(b)}}\stackrel{D_1\approx\sigma_X^2 \hspace{1mm} \text{and} \hspace{1mm} \sigma_X^2 = \sigma_{\hat{X}_{D_3, C_3}}^2}{\approx} 1.
\end{align*}

A similar method can be utilized to bound the upper-left corner of the curve, i.e., the gap between $(D_1,C_1)$ and the upper-left extreme point $(\tilde{D}^{(a)},\tilde{C}^{(a)})$ of the blue curve in Fig.~\ref{Fig:Universal_GernalSource_CDR}. Let \( \tilde{D}^{(a)} = \mathbb{E}[\|X - \mathbb{E}[X \mid \hat{X}_{D_3,C_3}]\|^2] \). Then:
\[
\tilde{D}^{(a)} \leq \sigma_X^2 - \frac{(\sigma_X^2 + \sigma_{\hat{X}_{D_3,C_3}}^2 - D_3)^2}{4 \sigma_{\hat{X}_{D_3,C_3}}^2},
\]
which, together with \( D_1 \geq \frac{1}{2} D_3 \), yields:
\begin{align*}
&\tilde{D}^{(a)}-D_1 \leq \sigma^2_X - \frac{(\sigma^2_X + \sigma_{\hat{X}_{D_3, C_3}}^2 - D_3)^2}{4 \sigma_{\hat{X}_{D_3, C_3}}^2} - \frac{D_3}{2},\\
&\frac{\tilde{D}^{(a)}}{D_1}\leq \frac{\sigma^2_X - \frac{(\sigma^2_X + \sigma_{\hat{X}_{D_3, C_3}}^2 - D_3)^2}{4 \sigma_{\hat{X}_{D_3, C_3}}^2}}{D_3/2}.
\end{align*}

\underline{For $D_3 = 0$:}
\begin{align*}
\tilde{D}^{(a)} - D_1 &\leq \sigma^2_X - \frac{(\sigma^2_X + \sigma_{\hat{X}_{D_3, C_3}}^2)^2}{4 \sigma_{\hat{X}_{D_3, C_3}}^2},
\end{align*}

If of $W_2^2(p_X, p_{\hat{X}_{D_3, C_3}}) = 0$, then:
\begin{equation*}
\tilde{D}^{(a)}-D_1 \stackrel{D_3\approx0 \hspace{1mm} \text{and} \hspace{1mm} \sigma_X^2 = \sigma_{\hat{X}_{D_3, C_3}}^2}{\approx}0, 
\end{equation*}

\underline{For $D_3 = 2\sigma_X^2$:}
\begin{align*}
&\tilde{D}^{(a)}-D_1 \leq \sigma^2_X - \frac{( \sigma_{\hat{X}_{D_3, C_3}}^2 - \sigma^2_X)^2}{4 \sigma_{\hat{X}_{D_3, C_3}}^2} - \sigma^2_X,\\
&\frac{\tilde{D}^{(a)}}{D_1} \leq \frac{\sigma^2_X - \frac{( \sigma_{\hat{X}_{D_3, C_3}}^2 - \sigma^2_X)^2}{4 \sigma_{\hat{X}_{D_3, C_3}}^2}}{\sigma_X^2}.
\end{align*}

Similarly, in the case of $W_2^2(p_X, p_{\hat{X}_{D_3, C_3}}) = 0$, we have:
\begin{align*}
&\tilde{D}^{(a)}-D_1 \stackrel{D_3\approx 2\sigma_X^2 \hspace{1mm} \text{and} \hspace{1mm} \sigma_X^2 = \sigma_{\hat{X}_{D_3, C_3}}^2}{\approx}0, \\
&\frac{\tilde{D}^{(a)}}{D_1} \stackrel{D_3\approx 2\sigma_X^2 \hspace{1mm} \text{and} \hspace{1mm} \sigma_X^2 = \sigma_{\hat{X}_{D_3, C_3}}^2}{\approx} 1.
\end{align*}

\end{document}